\def\BibTeX{{\rm B\kern-.05em{\sc i\kern-.025em b}\kern-.08em
    T\kern-.1667em\lower.7ex\hbox{E}\kern-.125emX}}
\DeclarePairedDelimiter\ceil{\lceil}{\rceil}
\newtheorem{theorem}{Theorem}[section]
\newtheorem{definition}{Definition}[section]
\newtheorem{proposition}{Proposition}[section]
\newtheorem{lemma}{Lemma}[section]
\newcounter{mylabelcounter}
\newcommand{\labelText}[2]{%
#1\refstepcounter{mylabelcounter}%
\immediate\write\@auxout{%
  \string\newlabel{#2}{{1}{\thepage}{{\unexpanded{#1}}}{mylabelcounter.\number\value{mylabelcounter}}{}}%
}%
}
\begin{document}


\title{A Control Architecture for Entanglement Generation Switches in Quantum Networks}
\author{\uppercase{Scarlett Gauthier}\authorrefmark{1,3},
\uppercase{Gayane Vardoyan\authorrefmark{1,3,4}, and Stephanie Wehner}\authorrefmark{1,2,3}}
\address[1]{QuTech, Delft University of Technology}
\address[2]{Kavli Institute of Nanoscience, Delft University of Technology}
\address[3]{Quantum Computer Science, Electrical Engineering, Mathematics and Computer Science, Delft University of Technology}
\address[4]{University of Massachusetts, Amherst}

\markboth
{Gauthier \headeretal: A Control Architecture for Entanglement Generation Switches}
{Gauthier \headeretal: A Control Architecture for Entanglement Generation Switches}

\corresp{Corresponding author: Scarlett Gauthier (email: s.s.gauthier@tudelft.nl).}

\begin{abstract}
       Entanglement between quantum network nodes is often produced using intermediary devices - such as heralding stations - as a resource. When scaling quantum networks to many nodes, requiring a dedicated intermediary device for every pair of nodes introduces high costs. Here, we propose a cost-effective architecture to connect many quantum network nodes via a central quantum network hub called an Entanglement Generation Switch (EGS). The EGS allows multiple quantum nodes to be connected at a fixed resource cost, by sharing the resources needed to make entanglement. We propose an algorithm called the Rate Control Protocol (RCP) which moderates the level of competition for access to the hub's resources between sets of users. We proceed to prove a convergence theorem for rates yielded by the algorithm. To derive the algorithm we work in the framework of Network Utility Maximization (NUM) and make use of the theory of Lagrange multipliers and Lagrangian duality. Our EGS architecture lays the groundwork for developing control architectures compatible with other types of quantum network hubs as well as system models of greater complexity.
\end{abstract}

\begin{keywords}
central quantum network hub, control protocol, entanglement generation, network utility maximization, resource sharing
\end{keywords}

\titlepgskip=-15pt

\maketitle

\section{Introduction}
\label{section:Intro}

A quantum network enables radically new capabilities that are provably impossible to attain in any classical network \cite{RoadmapQCC}. Examples  include applications such as secure communication \cite{bb84, e91}, secure quantum computing in the cloud \cite{bqc1, bqc2}, and clock synchronization \cite{clockSynchPaper}. Users utilize the end nodes of a network to run applications. The key to unlocking widespread roll-out of these applications is the ability to produce entanglement between these end nodes. 

Prevalent methods for generating entanglement between two quantum nodes that are directly connected by a quantum communication medium (e.g., optical fibers) involve an intermediate device. A prime example is heralded entanglement generation \cite{theoryHeraldCCGZ, theoryHeraldDLCZ} in which the intermediary device is a so-called heralding station. This method of producing entanglement has successfully been demonstrated in many experimental platforms including Color Centers \cite{DoubleClickDiamond, SingleClickDiamond}, Ion Traps \cite{HerEntTrappedIons1, HerEntTrappedIons2}, Atomic Ensembles \cite{HerEntOriginalAE, HerEntSecondAE} and Neutral Atoms \cite{NeutralAtomsHeralded}. As quantum networks continue to scale, it becomes increasingly impractical to maintain direct fiber connections and dedicated heralding stations for every pair of end nodes.

To address this challenge, we propose a scalable quantum network architecture for an Entanglement Generation Switch (EGS), a central hub equipped with a limited number of intermediate devices called resources, a switch, and a processor responsible for managing a scheduling algorithm and sending classical messages to nodes. This central hub enables multiple nodes to share the intermediate devices, significantly reducing the complexity and total resources required for large-scale deployment. While our results apply to an EGS sharing any type of entanglement generation resource, a specific example illustrates how an EGS can operate: Consider quantum network nodes that generate entanglement between them using the so-called single-click bipartite entanglement generation protocol (see e.g \cite{SingleClickDiamond}). In this case the resource(s) to be shared are the heralding station(s). Such stations consist of two input channels connected to a $50/50$ beam splitter, which is then connected by two output channels to a pair of photon detectors that are each connected to a device for processing the measurement outcomes, such as a Field Programmable Gate Array (FPGA). The basic principle of the single-click protocol requires that each network node of the pair locally generates entanglement between a qubit in their local memory and a travelling photon. The photon is sent to a heralding station at which an entanglement swap is attempted on the two photons received; if the entanglement swap is successful, the qubits of the two network nodes will have become entangled. An EGS aims to share one or more heralding stations amongst many connected network nodes. These nodes will still run the single-click protocol, but be limited to using the heralding station needed in the time allocated to them by the EGS. 

A crucial challenge in implementing such an architecture is the efficient allocation of the central hub's resources to different pairs of users in distinct time slots. Similar to classical networking, the allocation process should be driven by user demand for network resources. In the context of quantum networks, this translates to the demand of a user pair $(u_i, \ u_j)$ for entanglement generation at a specific rate or fidelity. Given a set of user demands, the EGS must compose a schedule for the allocation of resources in order to service those demands. In general, the total demand of users may exceed the available resources at the central hub, leading to scheduling and resource allocation challenges.

Here, we introduce the first algorithm for regulating user demand to an EGS, thereby solving this key challenge. Specifically, the algorithm takes as input a vector of rates of entanglement generation demanded by pairs of users and outputs an updated rate vector. The current set of user-originated demands is a measure of competition for EGS resources. We construct the algorithm within the Network Utility Maximization (NUM) framework, wherein the problem of demand regulation is cast as a constrained optimization problem. To solve the problem, we derive the algorithm by using the theory of Lagrange multipliers and Lagrangian duality. These tools, respectively, enable including the constraints together with the objective of the optimization problem and solving for a parameter vector which is the unknown value of the combined problem. Regulating competition for the resources by modifying user demand makes it possible to enforce a notion of fairness in the allocation of resources and maximize resource utilization. Since the algorithm regulates competition by calculating the rates demanded by users, we call it the Rate Control Protocol (RCP).

\subsection{Results Summary}
We make the following contributions:
\begin{itemize}
    \item We characterize (Theorem \ref{thm:CapReg}) the capacity region of the EGS, which is the maximal set of rates at which users can demand entanglement generation such that there exists a scheduling policy under which, on average, the demanded rates do not exceed the delivered rates. The impact of specifying the capacity region is that it delineates which rates can feasibly be serviced by the EGS.  
    \item We prove (Theorem \ref{thm:CapReg}) that under the Maximum Weight Scheduling policy (Definition \ref{def:MaxWeight}) for resource allocation it is possible for the EGS to deliver average rates of entanglement generation that match the requested rates, for any rate vector from within the capacity region. Therefore, an EGS operated with this scheduling policy can achieve throughput optimality as long as the rates demanded by users lie withing the capacity region. To prove the theorem, we use the Lyapunov stability theory of Markov chains. 
    \item We derive the RCP, an algorithm to regulate the rates of bi-partite entanglement generation which pairs of users demand from an EGS. The RCP solves the problem of moderating user competition for EGS resources. The derivation is based on techniques from Network Utility Maximization (NUM) and its quantum network extension (QNUM), where resource allocation in a (quantum) network is modelled as an optimization problem that can be solved using methods from convex optimization theory. 
    
    \item We prove (Theorem \ref{thm:ConvergenceThm}) that the sequence of arrival rate vectors yielded by the RCP converges over time slots to an optimum value, given any feasible rate vector as initial condition. The significance of this result is that if the RCP is used to set the demand rates of entanglement generation over a series of time-slots, the set of demanded rates will approach an optimal value, as long as the initial rate vector supplied to the algorithm is feasible. The proof relies on Lagrange multipliers and Lagrangian duality theory.

     \item Finally, we supply numerical results that support our analysis.
\end{itemize}

\subsection{Related Work}
A quantum network hub that can store locally at least one qubit per linked node and distributes entanglement across these links has been studied \cite{GayaneSwitch2, GuusEDS}. We refer to such a hub as an Entanglement Distribution Switch (EDS). This system differs from our system because the central hub has qubits and/or quantum memories, whereas our system does not. In \cite{GayaneSwitch2} the focus is on assessing the EDS performance in terms of the rate at which it creates $n-$partite entanglements, and in \cite{GuusEDS} the possible rate/fidelity combinations of GHZ states that may be supplied by an EDS \cite{GuusEDS} are studied. 

Maximum Weight scheduling is a type of solution to the problem of resource allocation which is based on assigning resources to sets of users with the largest service backlog. A Maximum Weight scheduling policy was originally presented in \cite{OrignalMaxWeightComms} for resource allocation in classical communication networks and was adapted to the analysis of a single switch for classical networking in \cite{MaxWeightClassicalSwitch}, where it was shown that under this scheduling policy the set of request arrival rates matches the request departure rates (or in other words the policy stabilizes the switch for all feasible arrival rates). In \cite{Vasantam_2022} the capacity region of an EDS, defined as the set of arrival rates of requests for end-to-end multi-partite entanglements that stabilize the switch, is first characterized. Using the Lyapunov stability theory of Markov chains, a Maximum Weight scheduling policy is proposed and shown to stabilize the switch for all arrival rates within the capacity region.
To summarize, in each of the classical network settings and in the EDS setting a Maximum Weight scheduling policy has the merit of achieving a specified performance metric. None of these results are immediately applicable to our system. We demonstrate that such a policy achieves the performance metric of throughput optimality when applied to the EGS by first characterizing the capacity region of the EGS, which has not been done before, and then proving that a Maximum Weight scheduling policy also achieves throughput optimality in our system. 

These results on the analysis of EDS systems constitute the first analytic approaches to resource allocation by a quantum network hub. However, due to the assumption that an EDS locally controls some number of qubits per link, the system has a high technical implementation cost which may not be compatible with near-term quantum networks. Moreover, although these works assume that there is competition between multiple sets of users, the focus is purely on the capacity of the EDS system. Conversely, our analytic contributions apply to EGS quantum network hubs, which have a low technical implementation cost because the hub does not require local control of any qubits or quantum memory. Furthermore, our results extend beyond the analysis of the capacity of the EGS and we propose the RCP as a solution to the problem of moderating competition for the EGS resources. 

In \cite{QuantumCity}, a quantum network topology is studied where user-controlled nodes are connected through a hub known as a Qonnector. The Qonnector provides the necessary hardware for limited end nodes to execute applications in pairs or small groups. A potential configuration of the Qonnector is as an EGS. While \cite{QuantumCity} focuses on assessing the performance of certain applications in this topology, it does not address control policies for the system. In contrast, our work examines control policies for an EGS.

NUM was first introduced in \cite{KellySeminal} and has been widely used to develop and analyze control policies for classical networks \cite{SrikantYing}. It is a powerful framework for designing and analyzing communication protocols in classical networks wherein the problem of allocating resources amongst competing sets of users is cast as a constrained optimization problem. This framework was recently extended to QNUM by \cite{GayaneNUM}. Therein, the authors first develop three performance metrics and use them to catalogue the utility of resource allocation in a quantum network model where each link is associated with a rate and fidelity of entanglement delivery to communicating users. This work does not immediately extend to control policies, as the resource allocations investigated are based on static numerical optimization and need to be recalculated in response to changes in the constraints or sets of users.

In classical networks, probabilistic failures such as loss of a message during transmission or irreconcilable distortion due to transmission over a noisy channel may occur. A serious challenge introduced in the analysis of quantum networks is that in addition to the failure modes of a classical network several new probabilistic failure modes arise that are independent of the state of the network but nevertheless affect its ability to satisfy demands. An example is the probabilistic success in practical realizations of heralded entanglement generation \cite{DoubleClickDiamond, SingleClickDiamond, HerEntTrappedIons1, HerEntTrappedIons2, HerEntOriginalAE, HerEntSecondAE, NeutralAtomsHeralded}. Due to this failure mode, scheduling access to a resource at a certain rate does not guarantee entanglement generation at that rate, thereby complicating the analysis of scheduling.

It is important to distinguish between the concept of rates in classical network control protocols and the notion of rate in the model of a quantum network hub presented here. In classical networks, users transmit \textit{data} at some rate and classical network control protocols, such as the Transmission Control Protocol (TCP), regulate the rate at which users send their data \cite{SrikantYing}. In contrast, in our quantum network hub model, users demand a rate of entanglement generation. However, a significant challenge in developing a control protocol for the EGS is the difference between the rate of attempted entanglement generation and the rate at which entanglement delivery is demanded and delivered to users. Explicitly, in the RCP it is the desired rates of entanglement generation that serve as the controllable parameters moderated by the protocol. 

\begin{figure*}[ht]
    \centering
    \includegraphics[scale=0.43]{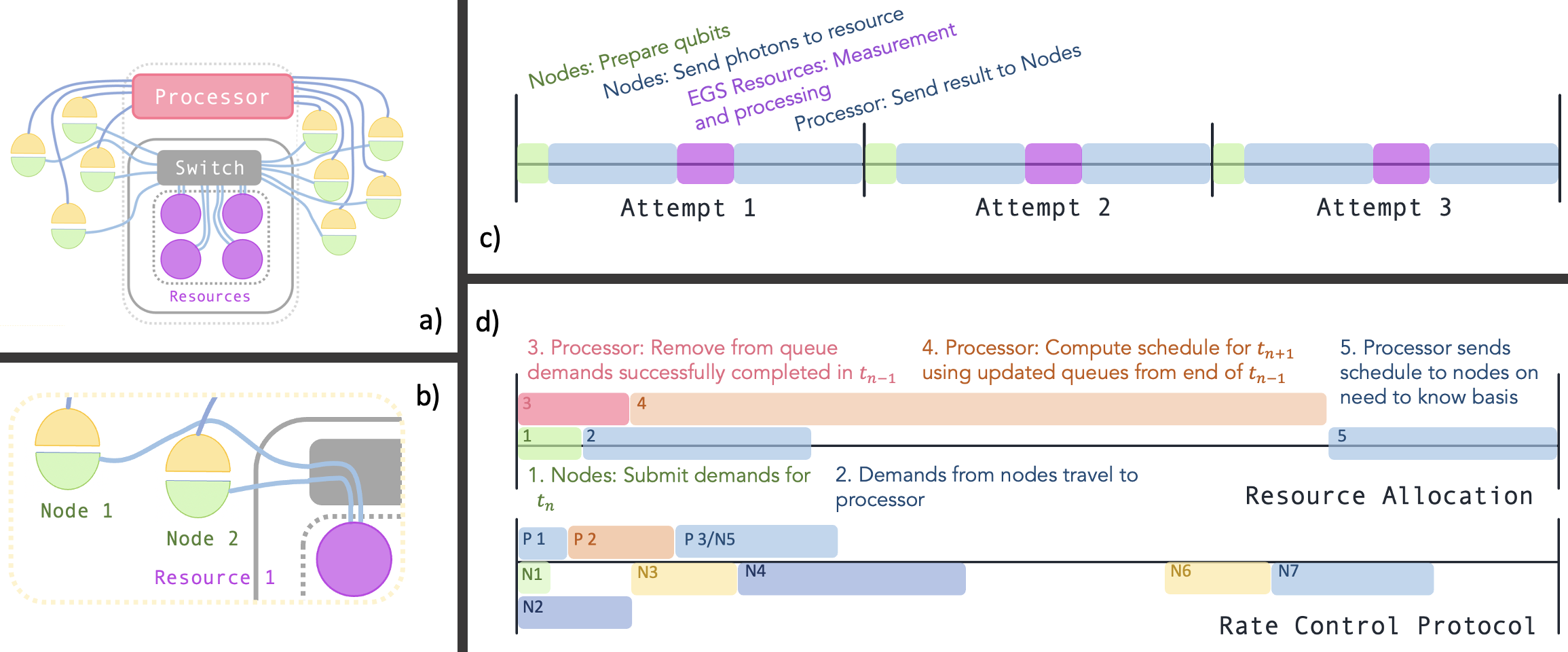}
    \caption{EGS Architecture: a) EGS structure: An EGS with $R=4$ resources connected to $N=9$ nodes. The EGS is controlled by a classical processor and consists of a switch, resources, and physical connections. Nodes have quantum communication channels to the switch and classical communication channels to the processor. b) Resource Allocation: The switch opens connections to link nodes 1, 2 and resource 1. For example, the connections may consist of direct optical fiber paths from the nodes to the switch and from the switch to the resource, via an interface at the switch. This establishes the physical allocation of resource 1 to the communication session of nodes 1, 2 for time slot $t_n$. c) Quantum communication sequence: Node-to-processor communication in time slot $t_n$ with a batch size of three entanglement generation attempts. d) Concurrent classical communication sequences: Nodes and the processor communicate in time slot $t_n$, governing resource allocation and the RCP (see Algorithm \ref{Box:rateControlProt} for RCP details.) }
    \label{fig:BasicSwitchSetup}
\end{figure*}
\section{Preliminaries}
\label{sec:Preliminaries}
Operation of the EGS requires interactions between the set of quantum network nodes $U$ and the EGS processor with control over $R$ resources. See Fig. \ref{fig:BasicSwitchSetup} a) for an overview of the physical architecture. Below we delineate the process by which pairs of nodes may request $\big{(}$Fig. \ref{fig:BasicSwitchSetup} d)$\big{)}$ and receive $\big{(}$Fig. \ref{fig:BasicSwitchSetup} b) and d)$\big{)}$ resource allocations from the processor. We assume: \begin{itemize}
    \item the EGS operates in a fixed-duration time slotted system where $t_n$ denotes the $n^{th}$ time slot;
    \item timing synchronization between the processor and each node is continuously managed by classical control electronics at the physical layer;
    \item allocation of a single resource to communication session $s$ for one time slot allows for the creation of a maximum of one entangled pair with a success probability of $p_{\text{gen}}$.  A consistent physical model involves a \textit{batched sequence} of attempts, which can be terminated upon the successful creation of an entangled pair or at the end of the time slot. See Fig. \ref{fig:BasicSwitchSetup} c) for an example quantum communication sequence compatible with heralded entanglement generation. 
\end{itemize}
The classical communication sequence repeated in each time slot $t_n$ which governs resource allocation is summarized in Fig. \ref{fig:BasicSwitchSetup} d). In what follows we introduce and explain each step of this communication sequence. The notation introduced throughout this section is summarized in Table \ref{tab:paramInventory}. 

\begin{table*}
    \centering
        \caption{Inventory of notation introduced in Section \ref{sec:Preliminaries}}
    \label{tab:paramInventory}
    \begin{tabular}{|c|c|c|}
         \hline \textbf{Identifier} & \textbf{Description} & \textbf{Domain}\\ \hline
         $U$ & the set of (user operated) quantum network nodes, of cardinality $|U| = N$ & $\mathbb{N}$ \\ \hline
         $R$ & the number of resources controlled by the EGS processor & $\mathbb{N}$ \\ \hline
         $t_n$ & the $n-th$ time-slot of the EGS system  & $\mathbb{N}$ \\ \hline
         $S$ & the set of communication sessions & $\{1, \cdots, N\}^{2}$ \\ \hline
         $\bm{\lambda}(t_n)$ & the vector of target rates of all communication sessions at time $t_n$ & ${\mathbb{R}^{+}}^{|S|}$ \\ \hline
         $a_s(t_n)$ & the number of demands from communication session $s$ in time-slot $t_n$ & $\mathbb{N}$ \\ \hline
         $\bm{q}(t_n)$ & the vector of queues, with components $\big{(}q_s(t_n) \ \forall s\big{)}$ & $\mathbb{N}^{|S|}$ \\ \hline
         $\bm{M}(t_{n})$ & the resource allocation schedule for time-slot $t_n$, & $\mathbb{N}^{|S|}$ \\ 
         & with components $\big{(}M_s(t_n) \ \forall s \big{)}$ & \\ \hline
         $x_s$ & the maximum number of resources that can be allocated to  & $\{1, \cdots, R \}$ \\
         & communication session $s$ in any one time-slot & \\ \hline
         $p_{\text{gen}}$ & the probability a communication session allocated a resource for one time-slot & [0, 1] \\
         & successfully generates entanglement & \\ \hline
         $g_s(t_n)$ & the number of successfully generated entangled pairs by & $\{0, 1, \cdots, M_s(t_n) \}$ \\
         & communication session $s$ in $t_n$ &  \\ \hline
         $\mathcal{C}$ & the set which has the capacity region of the EGS as interior & $\mathbb{R}^{+}$ \\ \hline
         $\lambda_{\text{EGS}}$ & the maximum total rate that can be delivered, on average, by the EGS, $\lambda_{\text{EGS}} = R \cdot p_{\text{gen}}$ & $\mathbb{R}^{+}$ \\ \hline
         $\lambda^{\max}_{\text{gen}, s}$ & the maximum rate the EGS can deliver, on average, to communication & $\mathbb{R}^{+}$ \\ 
         & session $s$, $\lambda^{\max}_{\text{gen}, s} = x_s \cdot p_{\text{gen}}$ & \\ \hline
         $\lambda^{\min}_{s}$ & a minimum acceptable rate of entanglement generation specified by & $\mathbb{R}^{+}$ \\ 
         & communication session $s$ & \\ \hline
         $\lambda_u \phantom{x.}$ & the maximum rate at which each node $u \in U$ can generate & $\mathbb{R}^{+}$ \\
         & and/or make use of entanglement, across all of the sessions that it is involved in & \\ \hline
         \end{tabular}
\end{table*}

\subsection{Demands for resource allocation from nodes to the EGS processor}
\begin{definition}[Target Rate, Communication Session]
\label{def:CommSession}
    Each possible pair of nodes has the potential to require shared bipartite entanglement. To fulfill this need, a node pair $(u_i, u_j)$ requires the processor to allocate a resource. The node pair sets a \textit{target rate} $\lambda_{(i, j)}(t_n)$ once per time slot, which represents the average number of entangled pairs per time slot they aim to generate using one or more EGS resources. A distinct pair of nodes with a non-zero target rate is referred to as a \textit{communication session} and is associated with a unique communication session ID, $s$. The set of communication session IDs, $S$ is defined as follows:
\begin{align}
        \label{eq:set_session_ids}
    S := \big{\{} s &=  (i, j) \ | \ i < j \text{ and } \nonumber \\
    & \lambda_{s}(t_n) > 0,  \forall \ (i, j) \in \{1, \cdots, N \}^{2} \big{\}}
\end{align} 
where $N= |U|$ is the total number of network nodes with connections to the EGS. 
\end{definition}

Henceforth each pair of nodes will be identified by its communication session id $s$. The target rates of all communication sessions in time-slot $t_n$ can be written as a vector $\bm{\lambda}(t_n) \in \mathbb{R}^{|S|}$, the $s^{th}$ component of which is labelled by communication session ID $s$ as $\lambda_s(t_n)$. 

A rate of entanglement generation is the service demanded by each communication session from the EGS. To address the difference between the desired rate and the rate at which a communication session requires resource allocation to achieve that rate, we establish the following model for demand, which is compatible with a discrete time scheduling policy.  

\begin{definition}[Demand]
    \textit{Demands} for resources are requests made by communication session $s$ to obtain a single entangled pair. The number of demands $a_s(t_n)$ submitted by session $s$ at time slot $t_n$ depends on its target rate $\lambda_s(t_n)$. If $\lambda_s(t_n) > 1$, then communication session $s$ first submits $\lfloor \lambda_s(t_n) \rfloor$ demands. For a communication session $s$ with $0 \leq \lambda_s(t_n) \leq 1$, or to account for the remaining part of the rate for any session with $\lambda_s(t_n) > 1$, each communication session randomly generates demands by sampling from a Bernoulli distribution with a mean equal to $\lambda_s(t_n) - \lfloor \lambda_s(t_n) \rfloor$, so that in general the submitted demands satisfy a (shifted) Bernoulli distribution, $a_s(t_n) \sim \text{Bernoulli}\big{(}\lambda_s(t_n) - \lfloor \lambda_s(t_n) \rfloor\big{)} + \lfloor \lambda_{s}(t_n) \rfloor$.  
\end{definition}

\begin{definition}[Designated Communication Node, Secondary Node]
\label{def:designatedSecondary}
    One of the nodes of every communication session is marked as the \textit{designated communication node} for communicating the entanglement requests to the switch. The terms designated communication node and \textit{secondary node} are used to refer to the two nodes of a communication session.
\end{definition}

\subsection{Processing demands for resource allocation}

\begin{definition}[Queue]
\label{def:queue}
    When the processor receives a demand, it is added to one of $|S|$ \textit{queues}, one for each communication session. The set of demands received by the processor by time-slot $t_n$ and not yet satisfied is captured by the queue vector $\mathbf{q}(t_n) \in \mathbb{N}^{|S|} = (q_s(t_n) \ \forall s)$, where the component $q_s(t_n)$ is the queue of communication session $s$ at time $t_n$. Each queue processes demands in first in first out order. As all demands are identical, we interchangeably use $q_s(t_n)$ to refer to both the queue length of communication session $s$ in time slot $t_n$ and the queue itself.
\end{definition}

\begin{definition}[(Demand-Based) Schedule]
A resource allocation \textit{schedule} is a vector $\bm{M}(t_{n+1}) \in \mathbb{N}^{|S|}$ calculated by the EGS processor in time slot $t_{n}$ determining the assignment of the resources for time slot $t_{n+1}$. A single session $s$ may be allocated the use of multiple resources, up to a maximum number $x_{s}$ set by the EGS which does not exceed $R$, the total number of resources controlled by the EGS. For every session $s \in S$ the entry \begin{equation}\label{eq:schedEntries} M_s(t_{n+1}) \in \{0, \ 1, \ \cdots, \ x_{s} \}\end{equation} corresponds to the number of resources assigned to $s$ for the entire duration of time slot $t_{n+1}$. A \textit{demand based} schedule is based on the vector of all queues, $\bm{q}(t_{n})$, as it stands before new demands are registered in $t_{n}$, and satisfies
\begin{align}
    \label{eq:schedFullRestrict}
    \sum_s M_s(t_{n+1}) &\leq \min \big{(}\sum_s q_s(t_{n}), \ R \big{)}, \\
    \label{eq:schedSingleSessionRestrict}
    0 \leq M_s(t_{n+1}) & \leq \min\big{(}q_{s}(t_{n}), \ x_{s}\big{)} \leq R,  \ \forall \ s.
\end{align}
\end{definition}

Each node of a communication session $s$ requires a physical connection to the EGS switch. A single physical connection, such as an optical fiber, can be used for this purpose. To enable multiple connections between a node and the switch, options include the use of optical multiplexers over a single fiber or utilizing multiple fibers within a fiber bundle. The parameters $\big{(}x_s \ \forall s \big{)}$ are motivated by situations where the number of physical connections that can be dedicated to service communication session $s$ are limited. 

\begin{definition}[Maximum Weight Scheduling]
\label{def:MaxWeight}
The set $\mathcal{M}$ of feasible demand based schedules at time slot $t_n$ contains all vectors $\mathbf{M}'(t_{n+1}) \in \mathbb{N}^{|S|}$ satisfying (\ref{eq:schedEntries}), (\ref{eq:schedFullRestrict}), and (\ref{eq:schedSingleSessionRestrict}). The EGS processor selects a \textit{maximum weight schedule} $\mathbf{M}(t_{n+1}) \in$ $\mathcal{M}$ from the feasible schedules for the following time slot by solving for 
\begin{equation}
\label{eq:MaxWeightSched}
    \mathbf{M}(t_{n+1}) \in \text{arg}  \ \underset{\mathbf{M}'}{\text{max }} \sum_s q_s(t_{n}) M'_s(t_{n+1}). 
\end{equation}
In words, the schedule is selected from the set of feasible schedules by first solving for the subset of schedules that allocate resources to the sessions with the largest number of queued demands. If that subset contains more than one schedule, a schedule is randomly selected from the subset. 
\end{definition}

By the end of $t_n$, the schedule for $t_{n+1}$ has been computed by the processor and broadcast to the nodes. If the schedule allocates use of a resource to communication session $s$ for $t_{n+1}$, the users of $s$ utilize the allocated resource to make a batch of entanglement generation attempts over the duration of $t_{n+1}$. The demand at the front of queue $s$ is only marked as served once both a resource has been allocated and the users of $s$ have successfully generated entanglement. Hence the dynamics of each queue are given by,
    \begin{equation}
    \label{eq:queuesUpdate}
    q_s(t_{n+1}) = [q_s(t_n) + a_s(t_n) - g_s(t_n)]^{+} \ \forall \ s, 
    \end{equation}
    
    \noindent where $[z]^{+} = \max(z, 0)$, and $g_{s}(t_n)$ is the number of successfully generated entangled pairs by $s$ during $t_n$. In words, every subsequent time slot the demands that arrived in the previous time slot are added to the queue and those that were scheduled and successfully resulted in the generation of an entangled pair are removed from the queue. The updated queue is always of non-negative length since the number of successfully generated entangled pairs is a sample of a binomial random variable where the number of trials is the number of resources allocated to $s$, $M_{s}(t_n) \big{(}\leq q_s(t_n)\big{)}$, and the trial success probability is $p_{\text{gen}}$, 
\begin{equation*}
    g_{s}(t_n) \sim \text{Bin} \big{(}M_{s}(t_n), \ p_{\text{gen}}\big{)}.
\end{equation*}

\begin{definition}[Supportable rate]
\label{def:supportable}
    The arrival rate vector $\bm{\lambda}(t_n) \in \mathbb{{R}^{+}}^{|S|} = \big{(}\lambda_s(t_n) \ \forall \ s\big{)}^{\text{T}}$ is \textit{supportable} if there exists a schedule under which,
    \begin{equation}
    \label{eq:supportableArrivals}
    \underset{Q \rightarrow \infty}{\text{lim}} \ \underset{n \rightarrow \infty}{\text{lim}} \text{P} \big{(} |\bm{q}(t_n)| \geq Q \big{)} = 0,
    \end{equation}
    where $|\bm{q}(t_n)| := \underset{s}{\Sigma} |q_s(t_n)|$ is the sum of the number of demands in the queue of each session in time slot $t_n$. That is, $\bm{\lambda}(t_n)$ is supportable if the probability that the total queue length becomes infinite is zero.
\end{definition}

\begin{definition}[Capacity Region]
    The \textit{capacity region} of an EGS is the set of arrival rate vectors that are supportable by the EGS. For each rate vector $\bm{\lambda}$ in the capacity region, there exists some scheduling routine such that an EGS operating under that scheduling algorithm can support the rate vector $\bm{\lambda}$.  
\end{definition}

If the rate vector $\bm{\lambda}$ falls outside the capacity region, the EGS cannot support it under any scheduling algorithm, leading to unpredictable performance. The goal of moderating the rate vector through the Rate Control Protocol (RCP) is twofold: first, to keep it within the capacity region, and second, to maximize resource utilization by saturating the capacity region,  thus fully leveraging the potential of the EGS to facilitate entanglement generation.

\begin{theorem}[Capacity Region]
\label{thm:CapReg}
Let $x_{s}$ be the maximum number of resources that can be allocated to a session $s$ per time slot. For each resource, $p_{\emph{\text{gen}}}$ is the probability that a communication session allocated the resource for one time slot will successfully create an entangled pair. The capacity region of an EGS with R resources is the set of rate vectors $\bm{\lambda} \in \text{\emph{Int}} \mathcal{C}$, where $\mathcal{C}$ is defined as:
\begin{equation}
    \label{eq:CapReg}
    \mathcal{C} = \big{\{} \bm{\lambda}: \bm{\lambda}\geq \bm{0}, \ \sum_s \lambda_s \leq \lambda_{\text{\emph{EGS}}}, \text{ and } \lambda_{s} \leq \lambda^{\max}_{\text{\emph{gen}}, s} \ \forall \ s \in  S \},
\end{equation}
$\lambda_{\text{\emph{EGS}}} = R \cdot p_{\text{\emph{gen}}}$ and $\lambda^{\max}_{\text{\emph{gen}}, s} = x_{s} \cdot p_{\text{\emph{gen}}}$. Moreover, maximum weight scheduling (Definition \ref{def:MaxWeight}) is throughput optimal and supports any rate vector $\bm{\lambda} \in \emph{\text{Int}} \mathcal{C}$. For proof, see Section \ref{subsec:proofCapReg}. 
\end{theorem}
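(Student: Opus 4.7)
My plan is to establish the two parts of the theorem separately: an outer bound showing that any supportable rate must lie in $\overline{\mathcal{C}}$, and an inner bound showing that maximum weight scheduling supports every rate vector in $\text{Int}\,\mathcal{C}$, hence achieving throughput optimality.

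For the outer bound (necessity), I would argue by conservation. Any feasible schedule satisfies $M_s(t_n) \leq x_s$ and $\sum_s M_s(t_n) \leq R$, and the services are $g_s(t_n) \sim \mathrm{Bin}(M_s(t_n), p_{\text{gen}})$, so the conditional expected service to session $s$ is at most $x_s p_{\text{gen}}$ and the conditional expected total service is at most $R\, p_{\text{gen}}$. If $\bm{\lambda} \notin \overline{\mathcal{C}}$, then either $\lambda_s > x_s p_{\text{gen}}$ for some $s$ or $\sum_s \lambda_s > R\, p_{\text{gen}}$; using (\ref{eq:queuesUpdate}) together with the strong law of large numbers applied to $a_s(t_n)$ and $g_s(t_n)$, the corresponding queue (or $\sum_s q_s$) drifts to infinity almost surely, contradicting Definition \ref{def:supportable}.

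For the inner bound, I would use a quadratic Lyapunov function $V(\bm{q}) = \tfrac{1}{2} \sum_s q_s^2$ and apply the Foster-Lyapunov criterion to the Markov chain $\bm{q}(t_n)$ induced by maximum weight scheduling. Using (\ref{eq:queuesUpdate}) and standard manipulations, the one-step conditional drift satisfies
\begin{equation*}
    \mathbb{E}\bigl[V(\bm{q}(t_{n+1})) - V(\bm{q}(t_n)) \,\bigm|\, \bm{q}(t_n)\bigr] \leq K + \sum_s q_s(t_n)\bigl(\lambda_s - p_{\text{gen}} \, \mathbb{E}[M_s(t_{n+1}) \mid \bm{q}(t_n)]\bigr),
\end{equation*}
where $K$ is a constant depending on $R$, $\max_s x_s$, and $\max_s \lambda_s$. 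By (\ref{eq:MaxWeightSched}), maximum weight scheduling chooses $\bm{M}(t_{n+1})$ to maximize $\sum_s q_s(t_n) M_s(t_{n+1})$ over the feasible set $\mathcal{M}$, so it also dominates every randomized mixture of schedules in $\mathcal{M}$. The central observation is that the convex hull of integer feasible schedules equals the polytope $\{\bm{M} : 0 \leq M_s \leq x_s,\ \sum_s M_s \leq R\}$ (the constraint matrix is totally unimodular), and scaling by $p_{\text{gen}}$ gives exactly $\overline{\mathcal{C}}$ as the set of achievable average service rates.

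Given $\bm{\lambda} \in \text{Int}\,\mathcal{C}$, pick $\epsilon > 0$ such that $\bm{\lambda} + \epsilon\bm{1} \in \overline{\mathcal{C}}$; then there exists a randomized, queue-independent schedule $\bm{M}^\star$ with $p_{\text{gen}} \mathbb{E}[M^\star_s] \geq \lambda_s + \epsilon$ for all $s$. Substituting this schedule as a lower bound in the drift inequality yields
\begin{equation*}
    \mathbb{E}\bigl[V(\bm{q}(t_{n+1})) - V(\bm{q}(t_n)) \,\bigm|\, \bm{q}(t_n)\bigr] \leq K - \epsilon \sum_s q_s(t_n),
\end{equation*}
which is strictly negative whenever $\sum_s q_s(t_n) > K/\epsilon$. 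Foster-Lyapunov then gives positive recurrence of $\bm{q}(t_n)$, and in particular tightness of the stationary distribution, which is exactly condition (\ref{eq:supportableArrivals}) of Definition \ref{def:supportable}. The main obstacle I anticipate is the convex-hull / achievable-rate characterization: carefully verifying that every point of $\mathcal{C}$ is realized as an average of integer feasible schedules (so that the dominance argument of max weight over an arbitrary stationary randomized policy goes through) is the step that does the real work, since both the per-session caps $x_s$ and the total resource cap $R$ must be respected simultaneously.
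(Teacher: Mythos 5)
Your proposal is correct and follows essentially the same route as the paper: necessity via the strong law of large numbers applied to arrivals and the binomial service process to show queue divergence when $\bm{\lambda} \notin \mathcal{C}$, and sufficiency via a quadratic Lyapunov function, the Foster--Lyapunov criterion, and the dominance of maximum weight scheduling over a queue-independent randomized policy serving each session at rate $(\lambda_s + \epsilon)/p_{\text{gen}}$. If anything, you are more careful than the paper on the one step it leaves implicit --- that every point of $\mathcal{C}$ is realizable as an expected service rate of a randomized mixture of integer feasible schedules (your total-unimodularity observation); the paper simply asserts this, and the only other detail it supplies that you omit is the explicit verification that the queue process is an irreducible Markov chain before invoking Foster--Lyapunov.
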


The first requirement of $\mathcal{C}$ states that all request rate vectors must be positive, meaning every component of the rate vector must be positive or zero $(\bm{\lambda} \geq \bm{0} \Leftrightarrow \lambda_s \geq 0 \ \forall s \in S$). The second requirement enforces that the total rate of entanglement requested from the EGS, $\underset{s}{\sum} \lambda_s$, cannot exceed the total average service rate of the EGS, $R \cdot p_{\text{gen}}$. The final requirement states that the request rate $\lambda_s$ of any communication session $s$ must not exceed the maximum average service rate that can be allocated to the communication session, $x_{s} \cdot p_{\text{gen}}$.

\subsection{Constraints}
We assume that there are two types of constraints on the sequence of target rates set by a session. The first is a minimum rate of entanglement generation $\lambda^{\min}_s$; below this rate, session $s$ cannot obtain sufficient entangled pairs within a short enough period of time in order to enable its target application. The second constraint $\lambda_{u} \ \forall u \in U$ is an upper limit on the rate at which each node $u$ can generate and/or make use of entanglement across all of the sessions that it is involved in. This parameter can capture a range of technical limitations of the quantum nodes, including a limited rate of entanglement generation or a limited speed of writing generated entanglement to memory, hence temporarily decreasing the availability of the node for engaging in further entanglement generation immediately following the successful production of a pair.

\section{RCP Algorithm}
\label{sec:Algorithm}

An algorithm moderating competition for EGS resources enables the possibility of introducing a notion of fairness in how resources are allocated amongst competing communication sessions and ensuring that the resources are fully utilized. We consider a situation where the rate vector produced by any such algorithm is constrained by the maximum service rate of the switch, as described by the capacity region $\mathcal{C}$, as well as the node or user level constraints described by $\lambda_{u} \ \forall u$ and $\lambda^{\min}_{s} \ \forall s$. In the framework of NUM, we pose an optimization problem where each communication session $s$ is associated with a utility function $f_s(\lambda_s(t_n)): \mathbb{R} \mapsto \mathbb{R}$, which encodes the benefit $s$ derives from the rate vector $\bm{\lambda(t_n)}$. We apply the theory of Lagrange multipliers and Lagrangian duality (see \cite{Bertsekas} for detailed coverage) to formulate and analyze the optimization problem. We then derive the RCP (Algorithm \ref{Box:rateControlProt}) as the solution to this problem.

The primal problem is to maximize the aggregate utility or the total benefit that users derive from the EGS by maximizing the sum of the utility functions, including the constraints by the use of Lagrange multipliers. The dual problem is to determine an optimal vector of Lagrange multipliers. In the case where there is no duality gap \cite{Bertsekas}, a solution to the dual problem is equivalent to a solution of the primal problem. The vector of Lagrange multipliers $\bm{p}(t_{n+1}) = \big{(} p_c(t_{n}),  p_u(t_{n}) \ \forall u \big{)} \in {\mathbb{R}^{+}}^{(1 + N)}$,  with components for the processor and each node, is denoted as the price vector in our algorithm and serves as a measure of the competition for resources amongst the communication sessions. Define $S(u) : = \{ s :  u \in s\} \subseteq S$ to be the subset of communication sessions in which node $u$ participates. In each communication session one node is designated to communicate demand to the switch and the other node is secondary (see Definition \ref{def:designatedSecondary}). Note that $u \in s \Leftrightarrow s \in S(u)$. The feasible rate region of the communication session $s$ is,
\begin{equation}
\label{eq:SessionRateRegion}
    \Lambda_s : = \{\lambda_s :  \lambda^{\min}_{s} \leq \ \lambda_s \leq \lambda^{\max}_{\text{gen}, s}\}  \ \forall \ s,
\end{equation}
and the feasible region for a rate vector $\bm{\lambda}$ is,
\begin{equation}
\label{eq:rateRegion}
    \Lambda = \underset{s}{\bigcup} \Lambda_s.
\end{equation}
 We make the following two assumptions on the utility function $f_s$ of each communication session $s$:
\begin{itemize}
    \item [\textbf{A1:}] On the interval $\Lambda_s = [\lambda^{\min}_{s},  \lambda^{\max}_{\text{gen}, s}]$ the utility functions $f_s$ are increasing, strictly concave, and twice continuously differentiable; 
    \item [\textbf{A2:}] The curvatures of all $f_s$ are bounded away from zero on $\Lambda_s$. For some constant $\alpha_s >0$, 
    \begin{align*}
         -f^{''}_s(\lambda_s) &\geq \frac{1}{\alpha_s}  > 0 \ \forall \ \lambda_s \in \Lambda_s.
    \end{align*}
\end{itemize}

To ensure feasibility and satisfy the Slater constraint qualification \cite{Bertsekas}, in addition to assumptions A1 and A2 it is necessary that the rate vector with components equal to the minimal rates of each communication session is an interior point of the constraint set,
\begin{align}
    \label{eq:Slater1}
    \sum_{s} \lambda^{\min}_{s} &< \lambda_{\text{EGS}}; \\
    \underset{s \in S(u)}{\sum} \lambda^{\min}_{s} &< \lambda_{u} \ \forall \ u. \label{eq:Slater2}
\end{align}

\begin{framed}
\labelText{{ Algorithm 1: Rate Control Protocol (RCP)}}{} \\
\label{Box:rateControlProt}
\rule{0.98\textwidth}{1pt}\\
\vspace{-10pt}
\phantom{x}
\\
\textit{Processor's Algorithm}: At times $t_n = 1, \ 2, \cdots, $ the processor:
\begin{enumerate}
    \item receives rates $\lambda_{s}(t_n)$ from all communication sessions $s \in S$;
    \item computes a new central price, 
    \begin{align}
        \label{eq:CentralPriceUpdate}
    p_{c}(t_{n + 1}) = \big{[} & \frac{1}{\lambda_{\text{EGS}}}\sum_{s} q_s(t_n) \nonumber \\
    &+ \theta_c \big{(}\sum_{s}\lambda_s(t_n) - \lambda_{\text{EGS}} \big{)}\big{]}^{+},
    \end{align}
    where $\theta_c$ is a constant step-size for the central price;
    \item broadcasts the new central price $p_c(t_{n+1})$ to all communication sessions $s \in S$. 
\end{enumerate}

\textit{Network Node u's Algorithm}: At times $t_n = 1, \ 2, \cdots, $ network node $u$:
\begin{enumerate}
    \item marks the subset of communication sessions $\text{COMM}(u) \subseteq S(u)$ involving node $u$ for which it is the designated communication node;
    \item receives from every secondary node $u'$ the price $p_{u'}(t_n)$ for each communication session $s = (u, u') \in \text{COMM}(u);$
    \item computes a new node price, 
    \begin{align}
    \label{eq:userPriceUpdates} p_u(t_{n+1}) = \big{[} & \frac{1}{\lambda_{u}} \underset{s \in S(u)}{\sum}q_s(t_n) \nonumber \\
    &+ \theta_u \big{(} \underset{s \in S(u)}{\sum} \lambda_{s}(t_n) - \lambda_{u}\big{)}\big{]}^{+}, \ \forall \ u,
    \end{align}
    where $\theta_u \ \forall u$ is a constant step-size for each node, which may be fixed or differ from node to node;
    \item communicates the new price $p_u(t_{n+1})$ to the communication node from every communication session $s \in S(u) \setminus \text{COMM}(u)$ in which $u$ is a secondary node;
    \item receives from the switch the central price $p_c(t_{n+1})$;
    \item computes the new rate for every communication session $s \in \text{COMM}(u)$,\small
\begin{equation}\label{eq:formalAlgOptimalRates}
    \lambda_{s}(t_{n+1}) = \Bigg{[} \Big{(} \frac{\text{d} f_{s}}{\text{d}  \lambda_s}\Big{)}^{-1} \big{(}\bm{p}(t_{n+1})\big{)} \Bigg{]}^{\lambda^{\max}_{\text{gen}, s}}_{\lambda^{\min}_{s}},
\end{equation} \normalsize
where $[z]^{M}_{m} = \max \big{(}\min(z, \ M) , m \big{)}$ and $\bm{p}(t_i) = \big{(}p_c(t_i), \ p_u(t_i) \ \forall \ u\big{)}$ is the vector of prices pertaining to time slot $t_i$;
    \item communicates the new rate $\lambda_{s}(t_{n+1})$ to the EGS processor, for every communication session $s \in \text{COMM}(u)$. 
\end{enumerate}

\end{framed}

\subsection{Derivation}
\label{subsec:AlgDerivation}
Formally, the RCP yields rate vectors which solve the
\textit{Primal Problem:}
\begin{equation}\label{eq:session_utility}
    \underset{\bm{\lambda}\in \Lambda }{\max} \ F(\bm{\lambda}) : =  \sum_s f_{s}(\lambda_s), \\
\end{equation}
subject to,
\begin{align}
    \label{eq:SwitchConstraint}
    \sum_s \lambda_s &\leq \lambda_{\text{EGS}} \\
    \label{eq:UserConstraint}
    \sum_{s \in S(u)} \lambda_s &\leq \lambda_u \ \forall  u.
\end{align}

The Lagrangian, which includes the constraints (\ref{eq:SwitchConstraint}), (\ref{eq:UserConstraint}) with a vector of Lagrange multipliers $\bm{p} = (p_c, p_{u} \ \forall u) \geq \bm{0}$ together with the objective function (\ref{eq:session_utility}), is given by
\begin{align}
   \label{eq:FullLagrangian}
   L(\bm{\lambda}, \bm{p}) =  & \sum_{s} f_s(\lambda_s) - p_c \bigg{(} \sum_{s} \lambda_s - \lambda_{\text{EGS}} \bigg{)} \nonumber \\
   & - \underset{u}{\sum}p_u \bigg{(}\underset{s \in S(u)}{\sum} \lambda_s - \lambda_u\bigg{)}.
\end{align}

We identify that the problem is separable in the communication sessions, $S$, and re-write the Lagrangian in separable form,
\begin{align}
\label{eq:SepLagrangian}
    L(\bm{\lambda}, \bm{p}) = \sum_s l_s(\lambda_s) + p_c \lambda_{\text{EGS}} + \sum_{u} p_u \lambda_{u},
\end{align}
where $l_s(\lambda_s)$ is defined as \begin{equation*}
    l_s(\lambda_s) := f_s(\lambda_s) - \lambda_s p_c -  \lambda_s \underset{u \in s}{\sum} p_u,
\end{equation*} and we make use of the equivalence \begin{align*}
  \sum_{u} p_u  \sum_{s \in S(u)} \lambda_s = \sum_{s} \lambda_s \sum_{u \in s} p_u.  
\end{align*}

A rate vector $\bm{\lambda}^{*}$ is a local maximum of (\ref{eq:session_utility}) if it satisfies the optimality condition \cite{Bertsekas}, 
\begin{equation}
\label{eq:optCond}
    \grad_{\lambda_s} F(\bm{\lambda}^{*})^{T} (\bm{\lambda} - \bm{\lambda}^{*}) \leq 0 \ \forall \bm{\lambda} \in \ \Lambda.
\end{equation} If moreover $F(\bm{\lambda})$ is concave over $\Lambda$, then (\ref{eq:optCond}) is also sufficient for $\bm{\lambda}^{*}$ to maximize $F(\bm{\lambda})$ over $\Lambda$ \cite{Bertsekas} (it is also a global maximum).

To obtain a $\bm{\lambda}^{*}$ satisfying both the optimality condition (\ref{eq:optCond}) and the constraints (\ref{eq:SwitchConstraint}), (\ref{eq:UserConstraint}) we set the gradient with respect to rate of each communication session of the Lagrangian to zero, 
\begin{equation*}
    \grad_{\lambda_s} L = \sum_s \frac{\text{d} l_s (\lambda_s)}{\text{d} \lambda_s} = 0.
\end{equation*} The maximization in the primal problem (\ref{eq:session_utility}) is constrained to the feasible rate region defined by (\ref{eq:SessionRateRegion}), (\ref{eq:rateRegion}). To restrict solutions to the problem domain, any $\tilde{\bm{\lambda}}^{*} \cancel{\in} \Lambda$ is projected component-wise so that $\tilde{\lambda}^{*}_{s} \mapsto \lambda^{*}_{S} \in \Lambda_s \ \forall s$. With the assumptions in (\ref{eq:Slater1}), (\ref{eq:Slater2}) there exists at least one set of Lagrange multipliers \cite{Bertsekas}. In terms of a given vector of Lagrange multipliers $\bm{p}$, an optimal rate vector $\bm{\lambda}^{*}$ satisfies,
\begin{equation}\label{eq:formalOptimalRates}
    \lambda^{*}_{s} = \Bigg{[} \bigg{(} \frac{\text{d} f_{s}}{\text{d}  \lambda_s} \bigg{)}^{-1} (\textbf{p}) \Bigg{]}^{\lambda^{\max}_{\text{gen}, s}}_{\lambda^{\min}_{s}} \ \forall \ s,
\end{equation}
where $[z]^{M}_{m} = \max \bigg{(}\min\big{(}z, \ M\big{)} , m \bigg{)}$. To obtain a $ \bm{\lambda}^{*}$, it remains to obtain a vector of Lagrange multipliers. 

An optimal vector $\bm{p}^{*}$ of Lagrange multipliers is a solution to the \textit{Dual Problem:} 
\newline \indent Select $\bm{p} = (p_c, \ p_u \ \forall \ u)$ so as to achieve,
\begin{equation}
\label{eq:dualProblem}
    \underset{\bm{p} \geq \textbf{0}}{\text{inf }} D(\textbf{p}),
\end{equation}
\noindent
where the dual objective function $D(\textbf{p})$ is defined as,
\begin{equation}
\label{eq:dualObjective}
    D(\textbf{p}) = \underset{\bm{\lambda}\in \Lambda}{\text{sup}} \ L(\bm{\lambda}, \bm{p}). 
\end{equation}

With assumptions A1, A2 and (\ref{eq:Slater1}, \ref{eq:Slater2}), the problem satisfies the Slater constraint qualification and has no \textit{duality gap} \cite{Bertsekas}, meaning a solution to the dual problem is also a solution to the primal problem. Define $\bm{\lambda}^{*}$ to be a rate vector that maximizes $L(\bm{\lambda}, \ \bm{p})$. A vector of Lagrange multipliers $\bm{p}^{*}$ is an optimal solution to the dual problem if it satisfies the optimality condition, 
\begin{equation}
    \label{eq:dualOptimalityCond}
    \grad_{p} D(\bm{p}^{*})^{T} (\bm{p} - \bm{p^{*}}) \geq 0 \ \forall \bm{p} \geq 0.  
\end{equation}

Gradient projection is a type of algorithm where in order to solve an optimization problem such as the dual problem, (\ref{eq:dualProblem}), with respect to a vector $\bm{p}$, one starts by selecting some initial vector $\bm{p}(0)$ and iteratively adjusting $\bm{p}(t_n) \mapsto \bm{p}(t_{n+1}) $ by making steps in the opposite direction of the gradient of the objective function. We introduce a vector of step-sizes $\bm{\theta} = (\theta_c, \theta_u \ \forall u) \in \mathbb{R}^{1+N}$.  The components of $\grad_{p} D (\bm{p} )$ are,

\begin{align}
    \label{eq:dualGradC}
    \frac{\partial D(\bm{p})}{\partial p_c} &= - \ \bigg{(} \sum_{s} \lambda^{*}_s - \lambda_{\text{EGS}} \bigg{)}; \\
    & \nonumber \\
    \label{eq:dualGradU}
    \frac{\partial D(\bm{p})}{\partial p_u} &= - \ \bigg{(} \underset{s \in S(u)}{\sum} \lambda^{*}_{s} - \lambda_u \bigg{)} \ \forall \ u. 
\end{align}

An implementation of the gradient projection algorithm is to iteratively adjusting the Lagrange multipliers according to,

\begin{align}
    \label{eq:outlineCentralPrice}p_c(t_{n+1}) &= \big{[} p_{c}(t_n) +  \theta_{c} \big{(}\sum_{s} \lambda^{*}_{s}(t_n) - \lambda_{\text{EGS}}\big{)}\big{]}^{+}; \\
    & \nonumber \\
    \label{eq:outlineUserPricesSession}
    p_{u}(t_{n+1}) &= \big{[}p_{u}(t_n)  + \theta_u \big{(}\underset{s \in S(u)}{\sum} \lambda^{*}_{s}(t_n) - \lambda_u\big{)}\big{]}^{+}, \ \forall \ u,
\end{align}
\noindent
where $\lambda^{*}_{s}(t_n) = \lambda^{*}_{s}\big{(}\bm{p}(t_n)\big{)}$ is given by inputting the vector of Lagrange multipliers in (\ref{eq:formalOptimalRates}). An implementation of the algorithm necessitates identifying parameters in the system that correspond to the components of the vector of Lagrange multipliers. We note that the centralized price $p_c(t_n)$ and the user prices $p_u(t_n) \ \forall \ u$, have respectively, the same dynamics as the total queue lengths and the sum total of the session queue lengths in which user $u$ participates (\ref{eq:queuesUpdate}). Therefore, we make the following identifications,
\begin{align*}
    p_c(t_n) & \leftrightarrow \frac{1}{\lambda_{\text{EGS}}} \sum_s q_s(t_n); \\
    p_u(t_n) & \leftrightarrow \frac{1}{\lambda_u} \sum_{s \in S(u)} q_s(t_n) \ \forall u. 
\end{align*}
Note that these identifications are not unique, since the only strict criteria on the identification is that the queue dynamics generated by (\ref{eq:queuesUpdate}) match the dynamics of (\ref{eq:outlineCentralPrice}) and (\ref{eq:outlineUserPricesSession}), whereas the scaling is arbitrary. For more information on the interpretation of Lagrange multipliers as prices in communication networks, see \cite{KellySeminal, SrikantYing}. 
\subsection{Convergence}
The RCP is a gradient projection algorithm with constant step-sizes from the vector $\bm{\theta} \in \mathbb{R}^{1 + N} = (\theta_c , \theta_u \ \forall u)$. Establishing that the algorithm converges is crucial to ensure that it yields solutions that effectively address the problem it is designed to solve. To establish convergence, we follow a similar treatment as in \cite{FlowControlI}. 

\begin{theorem}[RCP Convergence]
\label{thm:ConvergenceThm}
Suppose assumptions \href{asAl}{A1} and \href{asA2}{A2} and the constraints (\ref{eq:Slater1}, \ref{eq:Slater2}) are satisfied and each of the the step-sizes $\theta_r \in \{ \theta_c, \theta_u \ \forall u \}$ satisfies $\theta_{r} \in (0,   2/\overline{\alpha} |S|)$, where $\overline{\alpha} = \underset{s \in S}{\max} \  \alpha_s$ with $\alpha_s$ the curvature bound of assumption \href{asA2}{A2}, and $|S|$ is the number of communication sessions. Then, starting from any initial rate $\bm{\lambda}(0) \in \Lambda$ and price $\bm{p}(0) \geq \bm{0}$ vectors, every accumulation point $\big{(}\hat{\bm{\lambda}},  \hat{\bm{p}}\big{)}$ of the sequence over time slots $\{\big{(}\bm{\lambda}(t_n) ,  \bm{p}(t_n)\big{)} \}_{t_n}$ generated by the RCP is primal-dual optimal. For proof, see Section \ref{subsec:ConvergenceProof}.
\end{theorem}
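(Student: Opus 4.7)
The plan is to recognize the price updates (\ref{eq:CentralPriceUpdate}) and (\ref{eq:userPriceUpdates}) as a constant-step-size projected gradient iteration applied to the dual problem (\ref{eq:dualProblem}), and then use a descent-lemma argument to show that $\{D(\bm{p}(t_n))\}$ is monotonically non-increasing and that every accumulation point of the price sequence satisfies the dual optimality condition (\ref{eq:dualOptimalityCond}). Because A1, A2 and the Slater conditions (\ref{eq:Slater1}), (\ref{eq:Slater2}) ensure zero duality gap, convergence along a subsequence of $\bm{p}(t_n)$ to a dual optimum $\hat{\bm{p}}$ will automatically carry the primal iterate $\bm{\lambda}(t_n) = \bm{\lambda}^{*}(\bm{p}(t_n))$ given by (\ref{eq:formalAlgOptimalRates}) to the unique primal maximizer $\bm{\lambda}^{*}(\hat{\bm{p}})$, yielding primal-dual optimality of $(\hat{\bm{\lambda}}, \hat{\bm{p}})$.

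First I would verify that $D(\bm{p})$ is convex and continuously differentiable on $\bm{p}\geq \bm{0}$ with gradient components given by (\ref{eq:dualGradC})-(\ref{eq:dualGradU}). Strict concavity of each $f_s$ on $\Lambda_s$ (assumption A1) guarantees that the supremum in (\ref{eq:dualObjective}) is attained at the unique point $\bm{\lambda}^{*}(\bm{p})$ defined by (\ref{eq:formalOptimalRates}); Danskin's theorem then delivers the gradient expression and continuity of $\nabla D$.

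Next I would establish that $\nabla D$ is Lipschitz continuous with constant $L \leq \bar{\alpha}|S|$. By A2, $|f''_s|\geq 1/\alpha_s$ on $\Lambda_s$, so $(\mathrm{d}f_s/\mathrm{d}\lambda_s)^{-1}$ is $\alpha_s$-Lipschitz; combined with non-expansiveness of the projection onto $\Lambda_s$, this makes each component $\lambda^{*}_s(\bm{p})$ Lipschitz in $\bm{p}$ with constant at most $\bar{\alpha}$. Exploiting the sparsity of the session-constraint incidence (each session contributes to the central constraint and to exactly two user constraints, while each user constraint sees only $|S(u)|\leq |S|$ sessions), a row-sum/Gershgorin-style bound on the Jacobian of $\nabla D$ then yields the required Lipschitz constant. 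Under the step-size hypothesis $\theta_r \in (0, 2/(\bar{\alpha}|S|))$, the standard descent lemma for projected gradient methods gives
\begin{equation*}
D(\bm{p}(t_{n+1})) \leq D(\bm{p}(t_n)) - \gamma \|\bm{p}(t_{n+1}) - \bm{p}(t_n)\|^{2}
\end{equation*}
for some $\gamma > 0$.

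Summing this descent inequality shows $\|\bm{p}(t_{n+1}) - \bm{p}(t_n)\| \to 0$, and the Slater condition together with the monotone decrease of $D$ confines $\{\bm{p}(t_n)\}$ to a bounded set, so accumulation points exist. For any such $\hat{\bm{p}}$, passing to the limit in the projected-gradient fixed-point relation forces $\hat{\bm{p}}$ to satisfy (\ref{eq:dualOptimalityCond}), and continuity of $\bm{\lambda}^{*}(\cdot)$ together with zero duality gap delivers primal-dual optimality of $(\hat{\bm{\lambda}},\hat{\bm{p}})$. The main technical obstacle I expect is the tight Lipschitz bound: a na\"ive Cauchy-Schwarz estimate gives $L = O(|S|^{2})$, whereas the stated step-size condition demands $L \leq \bar{\alpha}|S|$. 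Obtaining the sharper constant requires careful use of the three-nonzero-entries-per-column structure of the session-constraint incidence matrix. The remaining steps - boundedness of iterates from the descent inequality, passage to the limit in the fixed-point relation, and invocation of strong duality - are standard once this Lipschitz bound is in hand.
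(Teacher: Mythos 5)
Your proposal follows essentially the same route as the paper's proof: both treat the price updates as a constant-step-size dual gradient projection, establish Lipschitz continuity of $\nabla D$ with constant $\overline{\alpha}|S|$ by computing and bounding the Hessian $\nabla^2 D(\bm{p}) = \tilde{R}B(\bm{p})\tilde{R}^{\text{T}}$ (your Gershgorin-style row-sum bound is the paper's operator-norm bound in different clothing), and then invoke the descent-lemma convergence result for projected gradient iterations with step size in $(0, 2/L)$, finishing with strong duality and continuity of $\bm{\lambda}^{*}(\cdot)$. The difficulty you flag about obtaining the sharp constant $\overline{\alpha}|S|$ rather than $O(|S|^{2})$ is real, and the paper's own one-line norm estimate is no more detailed than your sketch on this point.
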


\section{Case Study}
\label{sec:CaseStudy}

\begin{figure*}[ht]
    \centering
    \includegraphics[scale=0.36]{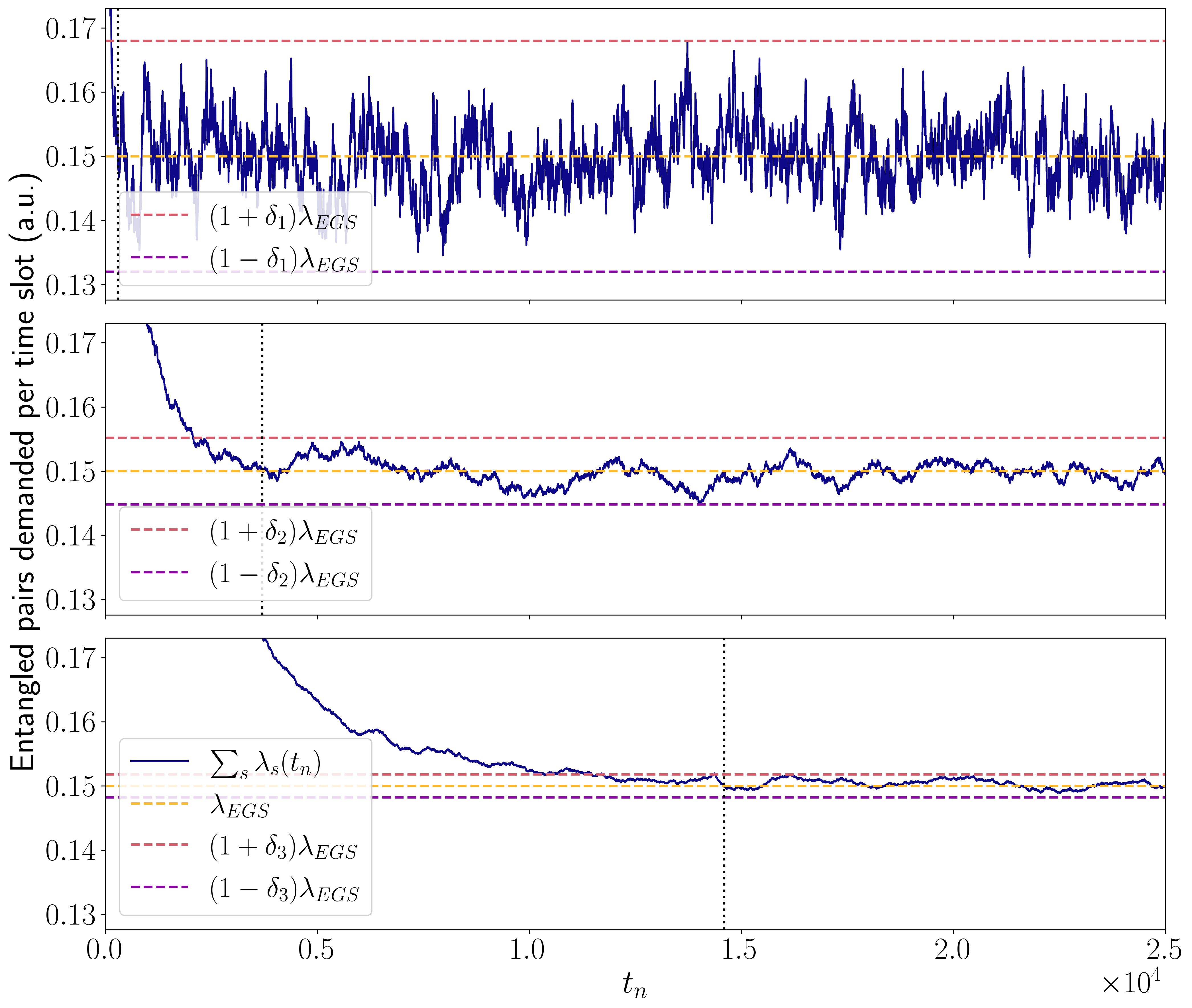}
    \caption{The RCP drives the sum of the demanded rates of entanglement generation across all communication sessions, $\underset{s}{\Sigma} \lambda_s(t_n)$, to converge with respect to the sequence of time slots to the maximum average entanglement generation rate of the EGS, $\lambda_{\text{EGS}}$. The EGS has $R=3$ resources, the probability of entanglement generation is $p_{\text{gen}} = 0.05$, and the EGS is connected to $N=20$ (top), $N=50$ (middle) and $N=100$ (bottom) nodes. The total number of communication sessions served are $|S| = 19, \ 123, \ \text{ and } 495$ in the top, middle, and bottom plots, respectively. Black dotted lines indicate the convergence times, $\Delta \tau$. The observed values for the tightness of convergence, $\delta$, are $ \delta_1 = 0.12$, $ \delta_2= 0.035$ and $ \delta_3 = 0.012$. Step-sizes $(\theta_c, \theta_u \ \forall u)$ were all $1/(40 \cdot \lambda_{\text{EGS}})$.}
    \label{fig:convStudy}
\end{figure*}

To illustrate use of the RCP we associate a log utility function with each session, 
\begin{equation}
    f_s(\lambda_s) = \log(\lambda_s) \ \forall \ s \in S.
\end{equation}
Log utility functions are suitable when throughput is the target performance metric, and a set of sessions all employing log utility functions will have the property of proportional fairness. In such a system, if the proportion by which one session rate changes is positive, there is at least one other session for which the proportional change is negative \cite{SrikantYing}. For compatibility with Theorem \ref{thm:ConvergenceThm} note that log utility functions satisfy A1, and A2 is satisfied with $\alpha_s = (\lambda^{\max}_{\text{gen}, s})^2 \ \forall s$.

Although the convergence theorem only guarantees asymptotic convergence of the sequence $\{ \big{(}\bm{\lambda(t_n)}, \bm{p(t_n)} \big{)}\}_{t_n}$ to an optimal rate-price pair $\big{(} \hat{\bm{\lambda}}, \hat{\bm{p}}\big{)}$, in any realization of an EGS one expects that the convergence time $\Delta \tau$, the number of time slots that the RCP must run before convergence is attained, is finite. In addition, it is practically relevant to characterize the tightness of convergence $\delta$, or the maximum size of fluctuations about the optima. 

If an EGS is connected to $N$ nodes, there are $|S|_{\max} = \binom{N}{2}$ possible sessions. We assume that in a real network not all pairs of users require shared entanglement. In Fig. \ref{fig:convStudy} we numerically investigate the convergence time and tightness of convergence, $(\Delta \tau, \delta)$, for an EGS with $R=3$ resources and $p_{\text{gen}} = 0.05$ connected to $N=20, \ 50$ and $100$ users, where the number of sessions is restricted to $|S| = 0.1 \cdot |S|_{\max}$ by randomly sampling $10 \%$ of the possible sessions. In these simulations we set $x_{s} = 1 \ \forall s$, and average over 1000 independent runs of the simulation, each using the same set of sessions.

The reported convergence times $\Delta \tau$ are the number of time slots that occur before the sum of demand rates $\underset{s}{\Sigma} \lambda_s(t_n)$ first crosses the optimal value $\lambda_{\text{EGS}}$. Reporting of the tightness of convergence, $\delta$, is based on the maximum size of fluctuations of $\underset{s}{\Sigma} \lambda_s(t_n)$ about $\lambda_{\text{EGS}}$ following $\Delta \tau$. As the number of sessions hosted by an EGS increases, we observe a trade-off between $\Delta \tau$ and $\delta$. When the number of sessions is lower, $\Delta \tau$ is shorter but $\delta$ is larger. We have performed additional simulations which indicate that increasing the step size used in the RCP can be used to trade larger $\delta$ for somewhat shorter $\Delta \tau$.

If constraint changes occur slowly compared to $\Delta \tau$, Theorem \ref{thm:ConvergenceThm} implies that the RCP should re-establish convergence to a new optimal rate and price vector pair, $(\hat{\bm{\lambda}}, \hat{\bm{p}}) \mapsto (\hat{\bm{\lambda}}', \ \hat{\bm{p}}')$. In a real EGS system it is possible that the number of available resources will not be static in time, as resources may require periodic downtime for calibration. The effect of a change in the number of resources $R \mapsto R'$ changes the maximum service rate $\lambda_{\text{EGS}} = R \cdot p_{\text{gen}} \mapsto  \lambda'_{\text{EGS}} = R' \cdot p_{\text{gen}}$. To validate the robustness of the algorithm against such constraint changes we simulate EGS systems originally equipped with $R=3$ resource nodes, where after every $10,000$ time-slots one of the resources may either be taken offline for calibration or an offline resource may be returned to service. Fig. \ref{fig:avReqRates} demonstrates that the RCP successfully re-establishes convergence of $\underset{s}{\Sigma} \lambda_s(t_n)$ about $\lambda'_{\text{EGS}}$ following these constraint changes in an EGS system connected to $N = 50$ nodes, serving $|S| = 123$ communication sessions. 

In Fig. \ref{fig:avReqRates} we record the sequence of convergence times, $\{\Delta \tau \}$, after each constraint change as the first time-steps where $\underset{s}{\Sigma} \lambda_s (t_n)$ crosses $\lambda'_{\text{EGS}}$. To calculate the tightness of convergence, $\delta$, we first calculate the sequence of $\{ \delta' \}$, the size of the maximum fluctuations about $\lambda'_{\text{EGS}}$ following each $\Delta \tau'$ and set $\delta = \max(\{\delta' \})$. Notably, every subsequent $\Delta \tau ' < \Delta \tau$ and the achieved $\delta$ is equal to that observed when there are no changes to the constraint set in Fig. \ref{fig:convStudy} (middle plot, $\delta_2$) for an EGS with the same number of nodes, serving the same number of communication sessions. Additional simulations of EGS systems connected to various numbers of nodes ranging from $10$ to $100$, with random changes to the number of resources after every $10, 000$ time-steps, suggest that the data in Fig. \ref{fig:avReqRates} is representative. Specifically, in each case investigated the absolute relative difference,
\begin{align*}
    \frac{|\delta - \tilde{\delta}|}{\tilde{\delta}} < 1
\end{align*} between the achieved tightness of convergence when there are ($\delta$) and are not ($\tilde{\delta}$) changes to the constraints is less than $1$. 


\begin{figure*}[ht]
    \centering
    \includegraphics[scale=0.42]{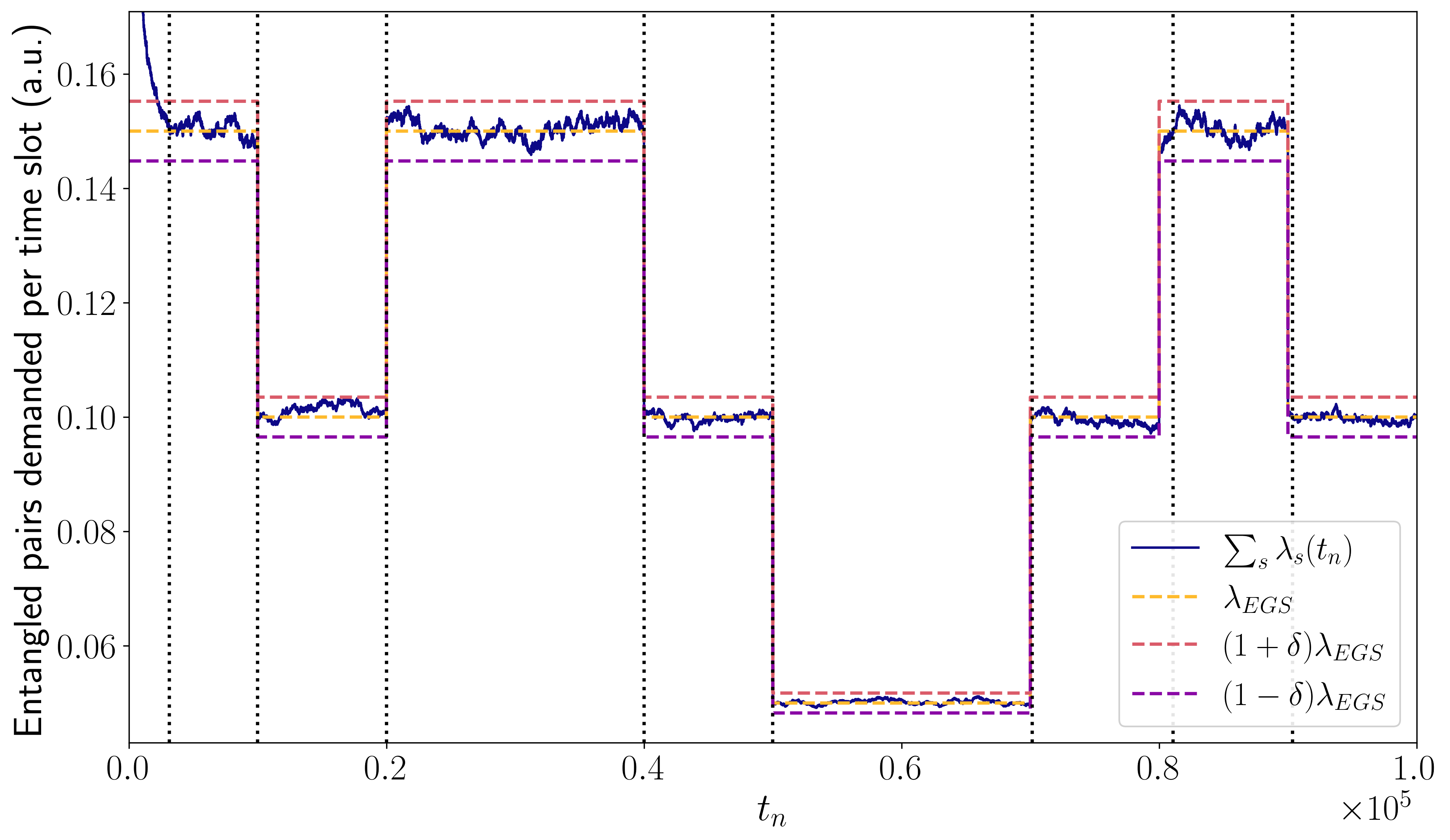}
    \caption{In response to changes in the number of resources available at the EGS ($R \rightarrow R'$), the RCP drives the sum of the demanded rates of entanglement generation across all communication sessions, $\underset{s}{\Sigma} \lambda_s (t_n)$, to  converge with respect to the sequence of time slots to the updated maximum average entanglement generation rate of the EGS, $\lambda_{\text{EGS}} = R' \cdot p_{\text{gen}}$. In simulation, an EGS connected to $N=50$ nodes, serving $|S|=123$ communication sessions, is originally equipped with $R=3$ resources. After every $10,000$ time-slots, one of the resources may either be taken offline for calibration or an offline node may be returned to service. Black dashed lines indicate the convergence, $\Delta \tau$ calculated for every $R'$ (initially $R$). We observe and overall tightness of convergence of $\delta=0.035$, identical to that observed in Fig. \ref{fig:avReqRates} for the EGS operated with fixed $R=3$ and with the same $N, \ |S|$.  Step-sizes $(\theta_c, \theta_u \ \forall u)$ were all $1/(10 \cdot \lambda_{\text{EGS}})$.}
    \label{fig:avReqRates}
\end{figure*}


The constraints $ \{ \lambda_u \}_u$ on the capabilities of nodes appear in (\ref{eq:userPriceUpdates}) and therefore affect both the prices calculated by the nodes and the rates set by communication sessions in (\ref{eq:formalAlgOptimalRates}). Since these constraints limit the total rate at which a node can submit demands summed across all of the communication sessions in which it participates, it is expected that uniform settings of $\{\lambda_u \}_u$ yield rate vectors under the RCP where $\{\lambda_s(t_n)\}_s$ are approximately uniform. In contrast, if the node constraints are non-uniform amongst the nodes, it is expected that the RCP yields rate vectors with larger differences between the rates set by each communication session. In Fig. \ref{fig:MaxMinDiffs} we investigate the effect of different settings for these constraints by plotting the difference between the average maximum $\underset{s}{\max} \{ \lambda_s(t_n)\}_s$ and minimum $\underset{s}{\min} \{ \lambda_s(t_n)\}_s$ communication session rates yielded by the RCP for two different settings of the constraints. In the first setting, node constraints are set uniformly as $\lambda_u = \big{(}(|S| -1)/2\big{)} \cdot p_{\text{gen}} \ \forall u $ so that in practice the algorithm functions as if the network node constraints have been removed. In the other setting there are three possible constraint values: a quarter of the nodes sampled at random have $\lambda_u = 1.5 \cdot p_{\text{gen}}$, half of the nodes have $\lambda_u = p_{\text{gen}}$, and a quarter of the  nodes have $\lambda_{u} = 0.5 \cdot p_{\text{gen}}$. Fig. \ref{fig:MaxMinDiffs} confirms that the difference between the average maximum rate and the average minimum rate requested by any session at time-step $t_n$ is one or more orders of magnitude larger when nodes are associated with the non-uniform constraint set. The uniform node constraint setting led to communication sessions updating their rates of demand submission to be nearly uniform across all communication sessions.

\begin{figure}[ht]
    \includegraphics[scale=0.23]{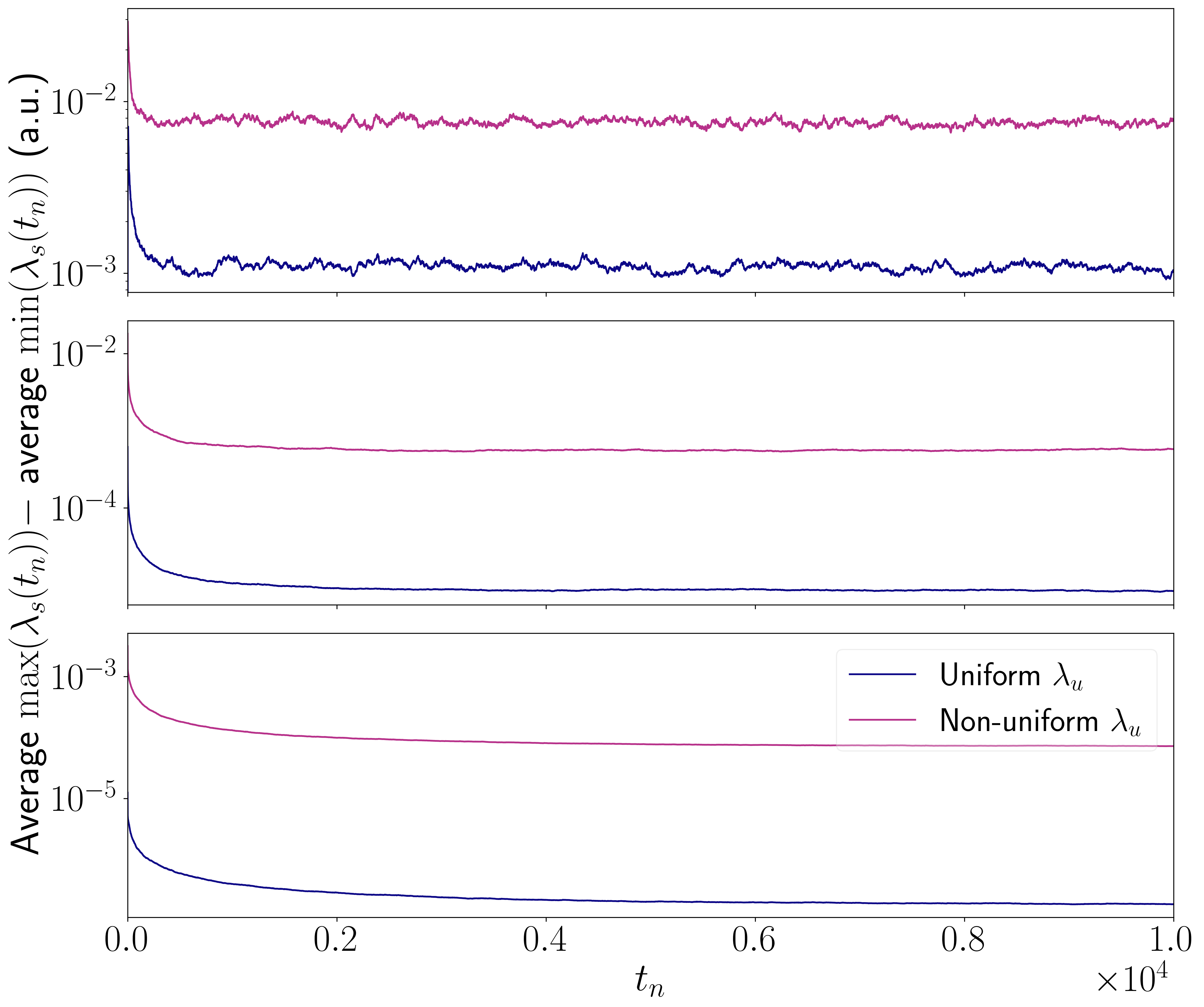}
    \caption{Differences between the average maximum rate and average minimum rate requested by any communication session in time-slot $t_n$, for an EGS connected to $N=20$ (top), $N=50$ (middle) and $N=100$ (bottom) nodes serving $|S| = 19, \ 123, \text{ and } 495$ communication sessions, respectively. As described in the main text, nodes are either associated with a uniform and effectively un-restricted set of capabilities or a non-uniform and more restricted set of capabilities. Step-sizes $(\theta_c, \theta_u \ \forall u)$ were all $1/(40 \cdot \lambda_{\text{EGS}})$.}
    \label{fig:MaxMinDiffs}
\end{figure}

\section{Discussion}
We have presented the first control architecture for an EGS. The architecture is tailored to a simple system model. As a natural extension of this work, a refined version of the control architecture can be developed to suit a more versatile physical model. In the following discussion, we explore considerations for the development of a second generation control architecture.

In this work we assume a demand model in which user generated demands are fully parameterized by a desired rate of entanglement generation. Specifically, every communication session $s$ sets $\lambda_s(t_n)$, updated once per time-slot and specifies the constraint parameter $\lambda^{\min}_{s}$ which defines the minimum rate of entanglement generation the communication session must receive in order to enable some target application. While this model is mathematically simple, it may not fully address real application requirements on a physical quantum network. Real applications may require the simultaneous existence of a number of entangled pairs, each with some minimum fidelity and it is possible that applications need such packets of pairs to be supplied periodically over a longer application run-time. In the future, it may therefore be relevant to consider a demand model wherein communication sessions submit demands for packets of entanglement generation. A packet would be fully specified by the desired number of entangled pairs, a minimum fidelity for the pairs, some maximum window of time between the generation time of the first and last entangled pair of the request, and possibly some rate at which the demand with the preceding parameters should be repeatedly fulfilled.

The discussed model assumes that user controlled nodes can engage in multiple entanglement generation tasks in parallel. We do not impose restrictions on simultaneously scheduling communication sessions. Hence, it is possible for communication sessions $s$ and $s'$ with node $u \in s, s'$ to be scheduled simultaneously. Additionally, we consider the option of assigning multiple resource nodes to a single communication session in any time-slot. Therefore, we consider nodes with an unrestricted number of qubits and independent physical connections to the EGS. A subtlety we do not address here is that allocating multiple resources to a single communication session may require temporal multiplexing in the scheduling of individual entanglement generation attempts, especially when the multiple qubits of a single node are coupled to the physical connection via a single output. Furthermore, for nodes consisting of a single quantum processor, it may not be possible to calibrate the node to simultaneously engage in entanglement generation attempts with multiple partner nodes, even if the node has unlimited qubits. To capture this physical feature, it will be interesting to include the restriction of scheduling only non-overlapping communication sessions in the design of scheduling routines for future EGS control architectures.
 
The control architecture for an EGS relies on precise timing synchronization. Our model assumes that at both the control and physical layers, all communication sessions can adhere to the time slots defined by the EGS processor. Tight synchronization of timing is possible at the physical layer, which controls the quantum devices and coordinates the exact timing of entanglement generation attempts. However, tight timing synchronization of any type of classical communication may be a considerable challenge in any real world application. In particular, such coordination is a serious challenge if there are non-uniform communication times between any of the nodes and the EGS or between any of the node pairs. To reduce the timing requirements and possibly make the control architecture delineated here executable on a real-world system, it is possible to consider the processor of the EGS simulating the actions of the nodes. To do so, the processor would locally run the RCP and simulate the generation of demands originating from the user operated nodes by simply adding demands to the queues based on the rates output by the RCP. Such an approach trades the difficulty of timing synchronization for the requirement of increased power of the classical processor at the EGS. To reduce the need for timing synchronization, a second generation architecture may be designed which does not rely on fixed, centrally defined time slots.

\subsection{Proofs}
\label{sec:Proofs}
\subsubsection{Outline of goals to prove}
In this section we will prove two theorems to establish the results quoted in the main body of the article. The results are as follows:
\begin{enumerate}
    \item The capacity region of the EGS is the set of demand arrival rate vectors fully contained in the set $\mathcal{C}$
    (\ref{eq:CapReg}) and maximum weight scheduling (Definition \ref{def:MaxWeight}) supports any rate vector from within $\mathcal{C}$ (Theorem \ref{thm:CapReg}). To establish the capacity region, we first prove a proposition stating that any rate vector $\bm{\lambda} \ \cancel{\in} \ \mathcal{C}$ necessarily results in divergent queues. We then prove a second proposition establishing at once that any rate vector $\bm{\lambda} \in \text{Int}\mathcal{C}$ is supportable under some scheduling algorithm and that maximum weight scheduling is such a scheduling algorithm. Therefore, we also demonstrate that maximum weight scheduling is throughput optimal. 
    \item The RCP, Algorithm \ref{Box:rateControlProt}, results in the calculation of a sequence of rate and price vector pairs  $\big{(}\bm{\lambda}(t_n) , \bm{p}(t_n)\big{)}$ which converge to optimal solutions $\big{(}\hat{\bm{\lambda}}, \ \hat{\bm{p}}\big{)}$ of the primal and dual problems, defined in Section \ref{sec:Algorithm} (Theorem \ref{thm:ConvergenceThm}). 
\end{enumerate}

\subsubsection{Proof of Theorem \ref{thm:CapReg}}
\label{subsec:proofCapReg}
 First it is to be shown that no rate vector $\bm{\lambda} \ \cancel{\in} \ \mathcal{C}$ of an EGS with $R$ resources is supportable under any scheduling algorithm. 
\begin{proposition}
\label{prop:NotSupportable}
If $\bm{\lambda}\ \cancel{\in} \ \mathcal{C}$, no scheduling algorithm can support $\bm{\lambda}$. 
\end{proposition}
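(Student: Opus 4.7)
The plan is to show that whenever $\bm{\lambda} \notin \mathcal{C}$, the expected drift of either the aggregate queue length or of some individual queue is strictly positive, which forces divergence and thereby violates Definition \ref{def:supportable}. Since $\bm{\lambda} \geq \bm{0}$ is automatic for physical rates, failure to lie in $\mathcal{C}$ must correspond to one of two cases: (i) $\sum_s \lambda_s > \lambda_{\text{EGS}}$, or (ii) $\lambda_{s^\star} > \lambda^{\max}_{\text{gen}, s^\star}$ for some session $s^\star \in S$. A crucial feature is that the statement rules out \emph{every} scheduling algorithm, so the argument should appeal only to the pathwise feasibility constraints (\ref{eq:schedFullRestrict}) and (\ref{eq:schedSingleSessionRestrict}) together with the distributional properties of $a_s(t_n)$ and $g_s(t_n)$.

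For case (i), I would iterate the elementary bound $[x]^+ \geq x$ in (\ref{eq:queuesUpdate}) and sum over sessions to obtain
\begin{equation*}
    \sum_s q_s(t_n) \geq \sum_s q_s(0) + \sum_{k=0}^{n-1}\Big(\sum_s a_s(t_k) - \sum_s g_s(t_k)\Big).
\end{equation*}
The arrival model gives $\mathbb{E}[a_s(t_k)] = \lambda_s$, so by the strong law of large numbers $n^{-1}\sum_{k=0}^{n-1}\sum_s a_s(t_k) \to \sum_s \lambda_s$ almost surely. On the service side, (\ref{eq:schedFullRestrict}) gives $\sum_s M_s(t_k) \leq R$ pathwise for any feasible schedule, while $\mathbb{E}[g_s(t_k)\mid M_s(t_k)] = p_{\text{gen}} M_s(t_k)$; applying a martingale LLN to the bounded martingale differences $\sum_s \bigl(g_s(t_k) - p_{\text{gen}} M_s(t_k)\bigr)$ then yields $n^{-1}\sum_{k=0}^{n-1}\sum_s g_s(t_k) \leq \lambda_{\text{EGS}} + o(1)$ almost surely. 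Combining,
\begin{equation*}
    \liminf_{n\to\infty} \frac{1}{n}\sum_s q_s(t_n) \geq \sum_s \lambda_s - \lambda_{\text{EGS}} > 0,
\end{equation*}
so $|\bm{q}(t_n)| \to \infty$ almost surely, which contradicts (\ref{eq:supportableArrivals}).

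Case (ii) is handled identically but specialised to the single session $s^\star$. The same chain of inequalities applies with the per-session feasibility bound $M_{s^\star}(t_k) \leq x_{s^\star}$ from (\ref{eq:schedSingleSessionRestrict}) replacing the aggregate bound, giving $\liminf_{n\to\infty} q_{s^\star}(t_n)/n \geq \lambda_{s^\star} - \lambda^{\max}_{\text{gen}, s^\star} > 0$. Since $|\bm{q}(t_n)| \geq q_{s^\star}(t_n)$, divergence of a single component already violates supportability.

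The main obstacle I anticipate is keeping the argument free of any assumption on the (possibly history-dependent and randomised) scheduling policy. The point that resolves this is that the feasibility bounds on $M_s(t_k)$ hold pathwise under every policy, so the only randomness left on the service side is the Bernoulli noise from each allocated resource, and this is absorbed by the martingale LLN without the argument ever having to know how $M_s(t_k)$ is chosen. A minor bookkeeping remark is that the crude inequality $[x]^+ \geq x$ is used in place of a direct analysis of the reflection; this loses nothing in the overloaded regime because once the relevant queue grows without bound the reflection is never triggered again.
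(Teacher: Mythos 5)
Your proof is correct and follows essentially the same route as the paper: the same case split (dismissing negative rates as unphysical), the same telescoping of the queue recursion via $[x]^{+} \geq x$, the strong law of large numbers for the arrivals, and the pathwise feasibility bounds $\sum_s M_s \leq R$ and $M_{s^\star} \leq x_{s^\star}$ on the service side, concluding linear growth of the total queue (case (i)) or of $q_{s^\star}$ (case (ii)). Your invocation of a martingale LLN for $\sum_s\bigl(g_s(t_k) - p_{\text{gen}}M_s(t_k)\bigr)$ is simply a more careful justification of the step the paper asserts directly in (\ref{eq:ExpectedPairGen})--(\ref{eq:scheduleSumIneq}), where the schedule itself is random and policy-dependent.
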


\begin{proof}

There are three cases where $\bm{\lambda}\ \cancel{\in} \ \mathcal{C}$,
\begin{enumerate}
    \item $\underset{s}{\sum} \lambda_s > R \cdot p_{\text{gen}}$, or
    \item $\lambda_{s^{*}} > x_{s^{*}} \cdot p_{\text{gen}} \text{ for some } s^{*} \in  S$.
    \item $\bm{\lambda}$ is not non-negative ( $\exists \ \lambda_{s^{*}} < 0$ for at least some $s^{*} \in S$).  
\end{enumerate} 
In the third case, the node pair corresponding to session $s^{*}$ has set a non-physical rate and the rate must be changed. The proof for case $(2)$ is very similar to case $(1)$ and equations from the first case are re-used or modified to complete the proof of case $(2)$. The main strategy of the proof relies on Definition \ref{def:supportable}; a rate vector $\bm{\lambda}\ \cancel{\in} \ \mathcal{C}$ is not supportable if $\bm{\lambda}$ causes the queue lengths at the EGS processor to diverge with probability 1, regardless of scheduling algorithm. To prove the proposition in each case, it serves to calculate the total queue length. 

\noindent
\textit{Proposition \ref{prop:NotSupportable} (1):} Suppose $\underset{s}{\sum} \lambda_s > R \cdot p_{\text{gen}}.$ Then, $\exists \ \epsilon > 0$ such that
    \begin{align}
\label{eq:constrainArrivals}
\sum_{s} \lambda_s \geq R \cdot p_{\text{gen}} + \epsilon.\end{align}

Assume that the initial length of each queue is finite. The sum of queue lengths at time step $t_{n+1}$,  $\underset{s}{\sum} q_s(t_{n+1})$ is, 
\begin{align}
    \sum_{s}  q_s(t_{n+1}) &= \sum_{s}  \big{[} q_s(t_n) + a_s(t_n) -g_s(t_n)\big{]}^{+} \nonumber \\
    & \geq \sum_{s}  \big{(}q_s(t_n) + a_s(t_n) - g_s(t_n) \big{)} \nonumber \\
    & \label{eq:QueueTelescopic} \geq \sum_{s}  \Big{(}q_s(t_1) + \overset{t_n}{\underset{t_i=t_1}{\sum}} \big{(}a_s(t_i) - g_s(t_i) \big{)} \Big{)}
\end{align}
\noindent
where $a_s(t_i)$ is the integer number of demands submitted by communication session $s$ at time step $t_i$ and $g_{s}(t_i)$ is the integer number of successfully generated entangled pairs between the nodes corresponding to communication session $s$ in time step $t_i$. The final inequality in (\ref{eq:QueueTelescopic}) follows from the previous inequality by repeated application of (\ref{eq:queuesUpdate}). By the strong law of large numbers,
\begin{align}
\label{eq:singleSLLN}
    \underset{n \rightarrow \infty}{\text{lim}}  \ \frac{1}{t_n} \ \overset{t_n}{\underset{t_i=t_1}{\sum}} \ a_s(t_i) =  \lambda_s \ \forall \ s \in S \text{, with probability 1}.
\end{align}
Recall that the number of successfully generated entangled pairs between the nodes corresponding to communication session $s$ at time $t_i$ is a sample from a binomial random process where the number of trials is set by $M_{s}(t_i)$ and the trial success probability is $p_{\text{gen}}$, 
\begin{align*}
    g_{s}(t_i) \sim \text{Bin} \big{(}M_{s}(t_i), \ p_{\text{gen}}\big{)}.
\end{align*}

By the strong law of large numbers, 
\begin{align}
\label{eq:ExpectedPairGen}
\underset{n \rightarrow \infty}{\text{lim}}  \ \frac{1}{t_n} \ \overset{t_n}{\underset{t_i=t_1}{\sum}} \  g_s(t_i) = M_{s}(t_n) \cdot p_{\text{gen}} \text{, with probability 1}.
\end{align}

Since each feasible schedule satisfies $\underset{s}{\sum} M_s(t_i) \leq R$, it follows from (\ref{eq:ExpectedPairGen}) that 
\begin{align}
\label{eq:scheduleSumIneq}
\underset{n \rightarrow \infty}{\text{lim}}  \ \frac{1}{t_n} \ \overset{t_n}{\underset{t_i=t_1}{\sum}} \  \sum_{s}
\ g_s(t_i) &= \sum_{s} \ \underset{n \rightarrow \infty}{\text{lim}}  \ \frac{1}{t_n} \ \overset{t_n}{\underset{t_i=t_1}{\sum}}
\ g_s(t_i) \nonumber \\
&\leq R \cdot p_{\text{gen}},
\end{align}
where we use the distribution property of limits, which is possible because the individual limits (\ref{eq:ExpectedPairGen}) exist. Finally, by assumption (\ref{eq:constrainArrivals}) and (\ref{eq:QueueTelescopic}, \ref{eq:singleSLLN}) and (\ref{eq:scheduleSumIneq}), \small
\begin{align}
    \underset{n \rightarrow \infty}{\text{lim}} &\ \frac{1}{t_n} \ \sum_{s}   q_s(t_{n+1}) \nonumber \\ 
    &\geq \underset{n \rightarrow \infty}{\text{lim}}  \ \frac{1}{t_n} \ \sum_{s} q_s(t_1) \nonumber \\ 
    & \phantom{x} + \underset{n \rightarrow \infty}{\text{lim}}  \ \frac{1}{t_n} \ \overset{t_n}{\underset{t_i=t_1}{\sum}} \big{(} \sum_{s} a_s(t_i) -  \sum_{s}  M_s(t_i)\big{)} \nonumber \\
    &\geq \sum_{s} \ \lambda_s - R \cdot p_{\text{gen}} \nonumber \\
    & \geq R \cdot p_{\text{gen}} + \epsilon - R \cdot p_{\text{gen}} \nonumber \\
    & \geq \epsilon. 
\end{align}
\normalsize
Therefore, with probability 1, $\underset{s}{\sum} q_s(t_n) \rightarrow \infty$ as $n \rightarrow{\infty}, $ so $\bm{\lambda}$ is not supportable, regardless of scheduling algorithm.

\noindent
\textit{Proposition \ref{prop:NotSupportable} (2):} Suppose that $\lambda_{s^*} > x_{s^{*}} \cdot p_{\text{gen}}$ for some $s^* \in S$. Then, $\exists \ \epsilon > 0$ such that,
\begin{align}
\label{eq:p2Violation}
    \lambda_{s^*} \geq x_{s^{*}} \cdot p_{\text{gen}} + \epsilon.
\end{align}

In this case, we show that $\bm{\lambda}$ is not supportable by proving that the queue $q_{s^{*}}(t_i)$ of demands associated with communication session $s^{*}$ diverges for large $t_i$. Recall (\ref{eq:ExpectedPairGen}) and note $M_{s}(t_i) \leq x_s \ \forall \ s, \ \forall \ t_i$. This inequality describes that a maximum of $x_s$ heralding stations can be allocated any communication session $s$ in $t_i$. With this restriction, (\ref{eq:ExpectedPairGen}) becomes,
\begin{align}
\label{eq:singlePairGenLimit}
\underset{n \rightarrow \infty}{\text{lim}}  \ \frac{1}{t_n} \ \overset{t_n}{\underset{t_i=t_1}{\sum}} \  g_s(t_i) \leq x_s \cdot p_{\text{gen}} \ \forall s. 
\end{align}

Combining assumption (\ref{eq:p2Violation}) using (\ref{eq:singleSLLN}), (\ref{eq:singlePairGenLimit}), and making repeated use of (\ref{eq:queuesUpdate}),
\begin{align}
\underset{n \rightarrow \infty}{\text{lim}} \ \frac{1}{t_n} \ & q_{s^{*}}(t_{n + 1}) \nonumber \\ 
&\geq \underset{n \rightarrow \infty}{\text{lim}}  \ \frac{1}{t_n} \  q_{s^{*}}(t_1) \nonumber \\ 
& \phantom{x} +  \underset{n \rightarrow \infty}{\text{lim}} \
\frac{1}{t_n} \ \overset{t_n}{\underset{t_i=t_1}{\sum}} \big{(} a_{s^{*}}(t_i) - g_{s^{*}}(t_i) \big{)} \nonumber \\
& \geq \lambda_{s^{*}} - x_{s^{*}} \cdot p_{\text{gen}} \nonumber \\
& \geq x_{s^{*}} \cdot p_{\text{gen}} + \epsilon - x_{s^{*}} \cdot p_{\text{gen}} \nonumber \\
\label{eq:concludeSingleOverloadedQueue} & \geq \epsilon.
\end{align}

Therefore, with probability $1$, $q_{s^{*}}(t_{n+1}) \rightarrow \infty$ as $n \rightarrow \infty$. Hence $\bm{\lambda}$ is not supportable. 
\end{proof}

Proposition \ref{prop:NotSupportable} proved that rate vectors  $\bm{\lambda} \ \cancel{\in} \ \mathcal{C}$ are not in the capacity region of the EGS. To finish proving $\mathcal{C}$ is the capacity region of the EGS (Theorem \ref{thm:CapReg}), it is necessary to prove that any rate vector $\bm{\lambda}\in \mathcal{C}$ is supportable under some scheduling algorithm. To do so, we prove that the specific scheduling algorithm of Maximum Weight Scheduling (Definition \ref{def:MaxWeight}) supports all arrival rate vectors fully contained in $\mathcal{C}$. 

\begin{proposition}
\label{prop:MaxWeightThroughputOptimal}
Maximum Weight scheduling can support any arrival rate vector $\bm{\lambda}$ for which $\exists \epsilon > 0$ such that $(1+\epsilon) \bm{\lambda}\in \mathcal{C}$.
\end{proposition}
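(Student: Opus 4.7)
The plan is to use a standard Lyapunov-drift / Foster argument tailored to Max-Weight scheduling. I would define the quadratic Lyapunov function
\begin{equation*}
V\!\bigl(\bm{q}(t_n)\bigr) = \tfrac{1}{2}\sum_{s} q_s(t_n)^2
\end{equation*}
and study the one-step conditional drift $\Delta(\bm{q}(t_n)) := \mathbb{E}\bigl[V(\bm{q}(t_{n+1})) - V(\bm{q}(t_n)) \mid \bm{q}(t_n)\bigr]$. Using the queue recursion \eqref{eq:queuesUpdate} together with the bound $([z]^{+})^2 \leq z^2$, I would expand
\begin{equation*}
q_s(t_{n+1})^2 \leq q_s(t_n)^2 + \bigl(a_s(t_n) - g_s(t_n)\bigr)^2 + 2 q_s(t_n)\bigl(a_s(t_n) - g_s(t_n)\bigr).
\end{equation*}
Taking conditional expectations and using $\mathbb{E}[a_s(t_n)] = \lambda_s$ and $\mathbb{E}[g_s(t_n)\mid \bm{q}(t_n)] = p_{\text{gen}} M_s(t_{n+1})$ yields
\begin{equation*}
\Delta(\bm{q}(t_n)) \leq K + \sum_{s} q_s(t_n) \lambda_s - p_{\text{gen}} \sum_{s} q_s(t_n) M_s(t_{n+1}),
\end{equation*}
where $K$ is a finite constant absorbing the second-moment terms (bounded because $a_s$ and $g_s$ take values in bounded ranges tied to $\lambda_s$ and $R$).

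The core step is then to exploit the Max-Weight rule \eqref{eq:MaxWeightSched}. Because the assumption $(1+\epsilon)\bm{\lambda} \in \mathcal{C}$ places the scaled rate vector inside the polytope defined by $\sum_s \lambda'_s \leq R p_{\text{gen}}$ and $\lambda'_s \leq x_s p_{\text{gen}}$, I would argue (by a convex-combination / Birkhoff-style decomposition of vectors in $\mathcal{C}$ into feasible schedules) that there exists a randomized stationary schedule $\bm{M}^\star$, chosen independently of $\bm{q}(t_n)$, whose expected service vector equals $(1+\epsilon)\bm{\lambda}/p_{\text{gen}}$ and which respects the per-session cap $x_s$ and the total cap $R$. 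Since Max-Weight maximizes $\sum_s q_s(t_n) M'_s$ over all feasible schedules, in particular
\begin{equation*}
\sum_s q_s(t_n) M_s(t_{n+1}) \;\geq\; \mathbb{E}\Bigl[\sum_s q_s(t_n) M^\star_s \,\Big|\, \bm{q}(t_n)\Bigr] = \frac{1+\epsilon}{p_{\text{gen}}} \sum_s q_s(t_n)\lambda_s .
\end{equation*}
Substituting back gives
\begin{equation*}
\Delta(\bm{q}(t_n)) \leq K - \epsilon \sum_s q_s(t_n) \lambda_s ,
\end{equation*}
which is strictly negative whenever $\sum_s q_s(t_n)\lambda_s > K/\epsilon$. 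A subtlety is that demand-based feasibility also requires $M_s(t_{n+1}) \leq q_s(t_n)$; I would handle this by noting that the inequality above is tight only up to an additive boundary term that vanishes outside a bounded set of $\bm{q}$, or equivalently by treating small queues as part of the bounded exception region.

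The conclusion follows from the Foster–Lyapunov theorem: negative drift outside a bounded set implies the Markov chain $\{\bm{q}(t_n)\}$ is positive recurrent, so $\lim_{Q\to\infty}\lim_{n\to\infty}\mathbb{P}(|\bm{q}(t_n)|\geq Q) = 0$, matching Definition \ref{def:supportable}. The main obstacle I anticipate is the randomized-schedule construction: one must verify that every $\bm{\lambda}' \in \mathcal{C}$ can be realized as the expected service vector of a distribution over schedules in $\mathcal{M}$, which requires checking that the vertices of $\mathcal{C}$ are achievable by integer schedules honoring both $\sum_s M_s \leq R$ and $M_s \leq x_s$. A secondary nuisance is carefully bounding the $\mathbb{E}[(a_s - g_s)^2]$ terms uniformly in $\bm{q}$, but this is routine given the per-slot arrival and service bounds.
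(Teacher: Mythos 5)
Your proposal follows essentially the same route as the paper: the quadratic Lyapunov function $\tfrac{1}{2}\sum_s q_s^2$, the drift bound via $([z]^{+})^2\le z^2$, comparison of the Max-Weight schedule against a queue-independent randomized schedule realizing a service rate strictly dominating $\bm{\lambda}$ (the paper uses $(\lambda_s+\epsilon)/p_{\text{gen}}$ rather than $(1+\epsilon)\lambda_s/p_{\text{gen}}$, an immaterial difference since $\lambda_s>0$ for every session), and the Foster--Lyapunov conclusion. The only piece you omit that the paper includes is verifying irreducibility of the queue Markov chain (by showing $\bm{0}$ is reachable from any state under Max-Weight), which is a hypothesis of the Foster--Lyapunov theorem and should be stated.
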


Modelling a queue vector as a Markov chain is a standard tool in queuing theory \cite{SrikantYing}. This approach makes it possible to take advantage of the many strong analytic results on the behaviour of Markov chains, which can then be used to make statements about the queue vector. The vector $\bm{q}(t_n) = \big{(}q_s(t_n\big{)} \ \forall \ s)$ of queued demands from each communication session maintained in the processor at $t_n$ can be modelled as a Markov chain, with transitions given by (\ref{eq:queuesUpdate}). An irreducible Markov chain has the property that any state $i$ of the chain is reachable from any other state $j$. A positive recurrent Markov chain has the property that from any state $i$, the expectation value of the time it will take to re-visit any other state $j$ is finite. A queue vector, with specified dynamics, that can be modelled as an irreducible Markov chain with the property of positive recurrence will not diverge (i.e. is guaranteed to remain a finite queue) \cite{SrikantYing}. The dynamics of such a queue vector are fixed by the arrival rate vector and the scheduling routine, therefore if a queue vector can be modelled as a positive recurrent Markov chain, the arrival rate vector is supportable by the scheduling routine. To prove Proposition \ref{prop:MaxWeightThroughputOptimal} we demonstrate that the queue vector is an irreducible Markov chain and use the \href{thm:Foster-Lyapunov}{Foster-Lyapunov Theorem} to prove that whenever $\bm{\lambda}$ lies strictly within $\mathcal{C}$ the Markov chain is also positive recurrent. An equivalent statement is that all rate vectors lying strictly within $\mathcal{C}$ are supportable by some scheduling algorithm.
\begin{theorem}[Foster-Lyapunov Theorem \cite{SrikantYing}]
\label{thm:Foster-Lyapunov}
Let $\{X_k\}$ be an irreducible Markov chain with a state space $\mathcal{S}$. Suppose that there exists a function $V \ : \ \mathcal{S} \rightarrow \mathbb{R}^{+}$ and a finite set $\mathcal{B} \subseteq \mathcal{S}$ satisfying the following conditions:
\begin{enumerate}
    \item $\mathbb{E}[V(X_{k+1}) - V(X_k) | X_k = x]  \leq - \epsilon$ if $x \in \mathcal{B}^c$, for some $\epsilon > 0$, and
    \item $\mathbb{E}[V(X_{k+1}) - V(X_k) | X_k = x]  \leq A$ if $x \in \mathcal{B}$, for some $A < \infty$.  
\end{enumerate}
Then the Markov chain $\{ X_k\}$ is positive recurrent.
\end{theorem}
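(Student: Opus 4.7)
\emph{Approach.} The plan is the classical Lyapunov drift / hitting-time argument. I will first show that from every starting state $x \in \mathcal{B}^c$ the hitting time of $\mathcal{B}$ has finite expectation, then use condition 2 and irreducibility to promote this into a bound on the expected return time to a single representative state $b \in \mathcal{B}$, which is precisely the definition of positive recurrence. Define the stopping time $\tau := \inf\{k \geq 1 : X_k \in \mathcal{B}\}$ and, for $x \in \mathcal{B}^c$, consider the nonnegative process $Y_k := V(X_{k \wedge \tau}) + \epsilon \, (k \wedge \tau)$. On the event $\{k < \tau\}$ one has $X_k \in \mathcal{B}^c$, so condition 1 gives
\begin{equation*}
\mathbb{E}[Y_{k+1} - Y_k \mid \mathcal{F}_k] \;=\; \mathbb{E}[V(X_{k+1}) - V(X_k) \mid X_k] + \epsilon \;\leq\; 0,
\end{equation*}
while on $\{k \geq \tau\}$ the increment is zero. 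Hence $\{Y_k\}$ is a nonnegative supermartingale, so $\epsilon \, \mathbb{E}_x[k \wedge \tau] \leq \mathbb{E}_x[Y_k] \leq Y_0 = V(x)$; letting $k \to \infty$ and invoking monotone convergence on $k \wedge \tau \uparrow \tau$ yields $\mathbb{E}_x[\tau] \leq V(x)/\epsilon < \infty$, and in particular $\tau < \infty$ almost surely.

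\emph{From set recurrence to point recurrence.} For $x \in \mathcal{B}$, one step plus condition 2 gives $\mathbb{E}_x[V(X_1)] \leq V(x) + A$, and applying the bound of the previous paragraph from $X_1$ via the Markov property produces $\mathbb{E}_x[\tau] \leq 1 + (V(x) + A)/\epsilon < \infty$. So every state in $\mathcal{B}$ has finite expected return time to the set $\mathcal{B}$. To convert this into positive recurrence of a single state, fix any $b \in \mathcal{B}$; by irreducibility, from each $x \in \mathcal{B}$ there is a path to $b$ of some length $N_x$ traversed with positive probability $p_x$, and since $\mathcal{B}$ is finite one has $N_\ast := \max_{x \in \mathcal{B}} N_x < \infty$ and $p_\ast := \min_{x \in \mathcal{B}} p_x > 0$. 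Analyzing successive excursions of the chain from $b$: each time the chain re-enters $\mathcal{B}$ it independently reaches $b$ within the next $N_\ast$ steps with probability at least $p_\ast$, so the number of excursions until return to $b$ is dominated by a geometric random variable with mean $1/p_\ast$, while each excursion has bounded mean length by the previous step. A Wald-type inequality then gives $\mathbb{E}_b[T_b] < \infty$, where $T_b$ is the return time to $b$, which is positive recurrence of the irreducible chain.

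\emph{Main obstacle.} The conceptual heart of the argument is the first step: upgrading a one-step drift inequality into a bound on the \emph{expectation of the stopping time} $\tau$. The right device is the supermartingale $Y_k = V(X_{k \wedge \tau}) + \epsilon (k \wedge \tau)$, but the subsequent limit is not automatic because $Y_k$ is not uniformly bounded in $k$; the argument works only because $\epsilon(k \wedge \tau)$ is nondecreasing (monotone convergence applies) and $V \geq 0$ allows the $V$ term to be dropped from the bound. A second, more structural, subtlety is that Foster--Lyapunov alone delivers recurrence of the \emph{set} $\mathcal{B}$ rather than of a single state; the promotion to positive recurrence of a single state, and hence of the whole chain, genuinely uses both the finiteness of $\mathcal{B}$ and the irreducibility hypothesis, each of which is indispensable.
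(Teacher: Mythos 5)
The paper does not actually prove this statement: the Foster--Lyapunov theorem is imported verbatim from \cite{SrikantYing} and used as a black box in the proof of Proposition \ref{prop:MaxWeightThroughputOptimal}, so there is no in-paper argument to compare yours against. Judged on its own terms, your proof is the standard textbook argument and is essentially correct: the nonnegative supermartingale $Y_k = V(X_{k\wedge\tau}) + \epsilon\,(k\wedge\tau)$ together with monotone convergence gives $\mathbb{E}_x[\tau]\le V(x)/\epsilon$ for $x\in\mathcal{B}^c$, one application of condition 2 extends the bound to $x\in\mathcal{B}$, and the finiteness of $\mathcal{B}$ plus irreducibility upgrade recurrence of the set to positive recurrence of a single state, hence of the whole chain (positive recurrence being a class property). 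You also correctly isolate the two genuinely load-bearing points: the monotone-convergence passage from $k\wedge\tau$ to $\tau$, and the fact that the drift condition alone only controls returns to the set $\mathcal{B}$.

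Two details are worth tightening if this were written out in full. First, the hypotheses implicitly require $\mathbb{E}[V(X_{k+1})\mid X_k=x]<\infty$ for the drift to be well defined; this should be stated. Second, the concluding Wald step is the only place where real content is left implicit: a single ``attempt'' to reach $b$ occupies $N_\ast$ steps that may span several returns to $\mathcal{B}$ and may leave the chain outside $\mathcal{B}$ when it fails, so you should (i) record the uniform bound $\sup_{x\in\mathcal{B}}\mathbb{E}_x[\tau]\le 1+\big(\max_{z\in\mathcal{B}}V(z)+A\big)/\epsilon<\infty$, which is where finiteness of $\mathcal{B}$ enters, and (ii) group the successive visit times to $\mathcal{B}$ into blocks of $N_\ast$ visits, so that by the strong Markov property each block contains a visit to $b$ with probability at least $p_\ast$ conditionally on the past; the geometric domination and the Wald-type bound $\mathbb{E}_b[T_b]\le N_\ast\,C/p_\ast$ then go through cleanly. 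These are routine repairs rather than gaps in the idea.
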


\begin{proof}[Proof of Proposition \ref{prop:MaxWeightThroughputOptimal}]
First we establish that the queue vector, $\bm{q}(t_i) \ \forall \ t_i$ is an irreducible Markov chain. The queue vector, $\bm{q}(t_i)$ is a Markov chain with state space 
\begin{align*}
    \mathcal{S} = \{ \bm{q} : \ & \bm{q} \text{ is reachable from } \bm{0} \\
    &  \text{ under the given scheduling algorithm} \}.
\end{align*}

Assume that $\bm{q}(t_1)$ is finite and $\bm{q}(t_1) \in \mathcal{S}$. It follows from the definition of $\mathcal{S}$ that  $\bm{q}(t_i) \in \mathcal{S} \ \forall t_i$ if $\bm{q}(t_1) \in \mathcal{S}$. Irreducibility of $\bm{q}(t_i) \ \forall t_i$ requires that any state $\bm{q}(t_j)$ is reachable from any other state $\bm{q}(t_i)$. By the definition of the state space $\mathcal{S}$, it suffices to demonstrate that from $\bm{q}(t_i)$, the Markov chain can always return to $\bm{0}$. Under Maximum Weight scheduling (Definition \ref{def:MaxWeight}), the processor always serves $k(t_i)$ demands per time-slot, where 
$$k (t_i)  = \text{max} \{k \ : \ k \leq R\text{ and } k \leq \underset{s}{\Sigma}  \ \min \big{(}|q_s(t_i)|, \ x_{s} \big{)}\},$$
where $|q_{s}(t_i)|$ is the number of demands in the queue for session $s$ in time-slot $t_i$ and $x_{s}$ is the maximum number of resource modules that can be allocated communication session $s$ per time-slot. Hence when $\bm{q}(t_i)$ is non-zero, at least one demand and up to $R$ demands are served per time-slot. Therefore, from any $\bm{q}(t_i) \in \mathcal{S}$, $\bm{q}(t_{i + l}) = \bm{0}$ is reachable from $\bm{q}(t_i)$ in  $l \in \{ \ceil{\frac{|\bm{q}(t_i)}{R}},  \ceil{\frac{|\bm{q}(t_i)}{R}} + 1, \cdots, \ |\bm{q}(t_i)|\}$ time steps, where $|\bm{q}(t_i)| := \underset{s}{\Sigma} \ |q_s(t_i)|$. Since any other $\bm{q}(t_j) \in \mathcal{S}$ is then reachable from $\bm{0}$, it follows that $\bm{q}(t_i)$ is irreducible. To prove that $\bm{\lambda} $ is supportable, it suffices to demonstrate that $\bm{q}(t_i)$ is positive recurrent. 

Define the Lyapunov function 
\begin{align}
    L\big{(} \bm{q}(t_i)\big{)} = \frac{1}{2} \sum_{s}q^{2}_s(t_i).
\end{align}

To apply the the Foster-Lyapunov theorem (\ref{thm:Foster-Lyapunov}), the key quantity is the drift of $L\big{(}\bm{q}(t_i)\big{)}$. Using the queue update dynamics (\ref{eq:queuesUpdate}), the drift can be expanded as
\begin{align}
\label{eq:LyapunovDrift}
    L\big{(}&\bm{q}(t_{i + 1})\big{)} - L\big{(} \bm{q}(t_i) \big{)} \nonumber \\
    & = \frac{1}{2}\sum_{s} \Big{(} \big{[} q_s(t_i) + a_{s}(t_i) - g_{s}(t_i) \big{]}^{+} \Big{)}^{2} - \frac{1}{2}\sum_{s} q^{2}_{s}(t_i) \nonumber \\
    & \leq \frac{1}{2}\sum_s \big{(} q_{s}(t_i) + a_s(t_i) - g_{s}(t_i) \big{)}^2 - \frac{1}{2}\sum_{s} q^2_{s}(t_i) \nonumber \\
    &= \frac{1}{2}\sum_{s} \big{(} a_{s}(t_i) - g_{s}(t)\big{)}^2 \nonumber \\
    & \phantom{x} + \sum_{s} q_{s}(t_i) \big{(} a_s(t_i) - g_{s}(t_i)\big{)}.
\end{align}
Taking the conditional expectation of the Lyapunov drift with respect to the randomness of arrivals and the probabilistic success of scheduled demands,
\begin{align}
\label{eq:CondExDrift1}
\mathbb{E} \big{[} & L \big{(} \bm{q}(t_{i + 1})\big{)} - L \big{(} \bm{q}(t_i)\big{)} \ | \ \bm{q}(t_i) = \tilde{\bm{q}} \big{]} \nonumber \\
& \leq \frac{1}{2} \sum_{s} \mathbb{E}\big{[}  \big{(}a_s(t_i) - g_s(t_i)\big{)}^2  | \ \bm{q}(t_i) =\tilde{\bm{q}}\big{]}  \nonumber\\ 
& + \sum_{s} \mathbb{E} \big{[}  q_s(t_i) \big{(} a_s(t_i) - g_s(t_i)\big{)} \ | \ \bm{q}(t_i) = \tilde{\bm{q}}\big{]},
\end{align}
where $\tilde{\bm{q}} \in \mathcal{S}$ is a particular queue vector. 

Using $\big{(} a_s - g_s\big{)}^2 \leq a^{2}_{s} + {g_{s}}^{2}$ and the linearity of expectation, the first term of the conditional expectation can be re-written, 
\begin{align}
\label{eq:squaredExpectations}
    \mathbb{E}\big{[}& \sum_{s} \big{(}a_{s}(t_i) - g_{s}(t_i) \big{)}^2 \ | \ \bm{q}(t_i) = \tilde{\bm{q}} \big{]} \leq \nonumber \\ 
    & \sum_{s} \mathbb{E} \big{[} a^{2}_{s}(t_i) \ | \ \bm{q}(t_i) = \tilde{\bm{q}} \big{]} + \sum_{s} \mathbb{E} \big{[} {g_{s}}^{2}(t_i) \ | \ \bm{q}(t_i) = \tilde{\bm{q}}\big{]}
\end{align}

Recall that $g_{s}(t_i) \leq M_{s}(t_i) \leq x_s \ \forall \ s, \ \forall \ t_i$. Hence, 
\begin{align}
\label{eq:expectationSquaredPairProduction}
    \underset{s}{\sum} \mathbb{E} \big{[} {g_{s}}^{2} (t) \ | \ \bm{q}(t) = \tilde{\bm{q}}\big{]} \leq \sum_{s} x_s^2. 
\end{align}

Define the variance in the arrivals to the queue of session $s$, $\sigma^{2}_{s} := \text{Var}[a_{s}(t_i)]$. Then, noting that the arrivals are independent of the state of the queues, using the definition of variance and $\mathbb{E}[a_s(t_i)] = \lambda_s$,

\begin{align}
\label{eq:arrivalVar}
\mathbb{E}[a^2_{s}(t_i) \ | \ \bm{q}(t_i) = \tilde{\bm{q}}] = \mathbb{E}[a^2_{s}(t_i)] = \sigma^2_{s} + \lambda^2_{s}
\end{align}

Together (\ref{eq:expectationSquaredPairProduction}) and (\ref{eq:arrivalVar}) bound the first term of (\ref{eq:CondExDrift1}),
\begin{align*}
    \frac{1}{2} \sum_{s} & \mathbb{E}\big{[} \big{(}a_s(t_i) - g_s(t_i)\big{)}^2  | \ \bm{q}(t_i) =  \tilde{\bm{q}}\big{]} \\ 
    & \leq \frac{1}{2} \sum_{s} \ \big{(} \sigma^{2}_{s} + \lambda^{2}_{s} +  x_s^2 \big{)} =: B 
\end{align*}

Then (\ref{eq:CondExDrift1}) is,
\begin{align}
    \mathbb{E} \big{[}& L \big{(} \bm{q}(t_{i + 1})\big{)} - L \big{(} \bm{q}(t_i)\big{)} \ | \ \bm{q}(t_i) = \tilde{\bm{q}} \big{]}  \nonumber \\ 
    &\leq B +  \ \sum_{s} \ \mathbb{E} \big{[} q_{s}(t_i) \big{(} a_s(t_i) - g_{s}(t_i) \big{)} \ | \ \bm{q}(t_i) = \tilde{\bm{q}} \big{]} \nonumber \\
    & = B + \ \sum_{s} \ \tilde{q}_s \big{(} \lambda_s - \mathbb{E} \big{[}  g_{s}(t_i) \ | \ \bm{q}(t_i) = \tilde{\bm{q}}\big{]} \big{)}.
\end{align}

Recall that the conditional expectation of the Lyapunov drift is taken with respect to the randomness of the arrival processes as well as the success of scheduled demands. The schedule selected for a given time-slot depends on the queues, but the success of any scheduled demand does not. The conditional expectation of pair production for communication session $s$ can be re-written as,
\begin{align}
\label{eq:ExPairProduction}
    \mathbb{E} [g_s(t_i) \ | \ \bm{q}(t_i) = \tilde{\bm{q}}] = p_{\text{gen}} \cdot \mathbb{E}[M_s(t_i) \ | \ \bm{q}(t_i) =\tilde{\bm{q}}]. 
\end{align}

Recall that $\bm{M}$ denotes the schedule decided under the maximum weight scheduling policy, \ref{def:MaxWeight}. Allow $\tilde{\bm{M}}$ to denote a schedule that is decided by any other scheduling policy. It follows from Definition \ref{def:MaxWeight} that, \small
\begin{align}
\label{eq:schedCompare}
    \underset{s}{\sum} \ \tilde{q}_s  \cdot \mathbb{E}[M_s(t_i) \ | \ \bm{q}(t_i) = \tilde{\bm{q}}] \geq \sum_{s} \ \tilde{q}_s \cdot \mathbb{E} [\tilde{M}_{s}(t_i) \ | \ \bm{q}(t_i) = \tilde{\bm{q}}]. 
\end{align} \normalsize
Consider a scheduling policy $\tilde{\bm{M}}$ which schedules each session at a rate of $\frac{\lambda_s + \epsilon}{p_{\text{gen}}}$ (this is possible since, by assumption, $(1 + \epsilon ) \bm{\lambda} \in \mathcal{C}$). Such a scheduling policy is aware of the demand arrival rates to each queue but is not demand based (i.e. it does not use queue information in deciding the schedule). Hence,

\begin{align}
\label{eq:altSched}
    \sum_s \ \tilde{q}_s \cdot \mathbb{E} [\tilde{M}_{s}(t_i) \ | \ \bm{q}(t_i) = \tilde{\bm{q}}] &= \sum_{s} \ \tilde{q}_s \cdot \mathbb{E} [\tilde{M}_{s}(t_i)] \nonumber \\ 
    & = \sum_{s} \ \tilde{q}_s \bigg{(}\frac{\lambda_s + \epsilon}{p_{\text{gen}}}\bigg{)}.
\end{align}
Combining (\ref{eq:ExPairProduction}), (\ref{eq:schedCompare}) and (\ref{eq:altSched}), the conditional expectation of the Lyapunov drift is bounded by, \small
\begin{align}
    \mathbb{E} \big{[}& L \big{(} \bm{q}(t_{i + 1})\big{)} - L \big{(} \bm{q}(t_i)\big{)} \ | \ \bm{q}(t_i) = \tilde{\bm{q}}\big{]}  \nonumber \\ 
    & \leq B + \ \sum_{s} \ \tilde{q}_s \lambda_s -  \sum_{s} \ \tilde{q}_s \cdot p_{\text{gen}} \cdot \mathbb{E} \big{[}  M_{s}(t_i) \ | \ \bm{q}(t_i) = \tilde{\bm{q}}\big{]} \big{)} \nonumber \\
    & \leq B + \sum_{s} \ \tilde{q}_s \lambda_s - p_{\text{gen}} \cdot \sum_{s} \ \tilde{q}_s \cdot \mathbb{E} [\tilde{M}_{s}(t_i)] \nonumber \\
    & \label{eq:ExCondDriftFinal} = B - \epsilon \cdot \sum_{s} \ \tilde{q}_s. 
\end{align}
\normalsize
Application of the Foster-Lyapunov theorem completes the proof.

\end{proof}

\subsubsection{Proof of Theorem \ref{thm:ConvergenceThm}}
\label{subsec:ConvergenceProof}
The proof of this theorem is closely inspired by the proof of an analogous theorem in \cite{FlowControlI}. To begin, we establish basic properties of the Dual function which follow from assumption \href{asA1}{A1}. 
\begin{lemma}
\label{lemma:dualconvex}
Under assumption \href{asA1}{A1} the dual objective function $D(\bm{p})$ (\ref{eq:dualObjective}) is convex, lower bounded, and continuously differentiable. 
\end{lemma}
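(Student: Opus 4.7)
The plan is to handle the three claims (convexity, lower boundedness, continuous differentiability) separately, since all three follow by specializing standard tools from convex analysis to the Lagrangian \eqref{eq:FullLagrangian} under assumption A1 and the Slater conditions \eqref{eq:Slater1}, \eqref{eq:Slater2}. No deep calculation is required; the work is in matching hypotheses.

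For convexity, I would observe that for each fixed $\bm{\lambda} \in \Lambda$ the Lagrangian $L(\bm{\lambda}, \bm{p})$ is affine, and hence convex, in $\bm{p}$. The dual objective $D(\bm{p}) = \sup_{\bm{\lambda} \in \Lambda} L(\bm{\lambda}, \bm{p})$ is then a pointwise supremum over a family of convex (in fact affine) functions of $\bm{p}$, which is a classical construction yielding a convex function on the domain $\{\bm{p} \geq \bm{0}\}$.

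For lower boundedness, I would exploit the Slater conditions directly. Substituting the minimum-rate vector $\bm{\lambda}^{\min} = (\lambda^{\min}_s)_s$ into \eqref{eq:FullLagrangian}, each term of the form $-p_c\bigl(\sum_s \lambda^{\min}_s - \lambda_{\text{EGS}}\bigr)$ and $-p_u\bigl(\sum_{s \in S(u)} \lambda^{\min}_s - \lambda_u\bigr)$ is non-negative whenever $\bm{p} \geq \bm{0}$, since \eqref{eq:Slater1} and \eqref{eq:Slater2} force the parenthesised quantities to be strictly negative. Hence $D(\bm{p}) \geq L(\bm{\lambda}^{\min}, \bm{p}) \geq \sum_s f_s(\lambda^{\min}_s)$, which is a finite constant because A1 makes each $f_s$ continuous on the compact interval $\Lambda_s$.

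For continuous differentiability, the core task is to show that the inner maximizer $\bm{\lambda}^{*}(\bm{p}) \in \arg\max_{\bm{\lambda} \in \Lambda} L(\bm{\lambda}, \bm{p})$ is single-valued and continuous in $\bm{p}$. Strict concavity of each $f_s$ on the compact interval $\Lambda_s$ (A1) implies that $L(\cdot, \bm{p})$ is strictly concave on the compact set $\Lambda$, so the maximizer is unique; Berge's maximum theorem then gives continuity of the map $\bm{p} \mapsto \bm{\lambda}^{*}(\bm{p})$. With these ingredients, Danskin's envelope theorem delivers continuous differentiability of $D$, with the partial derivatives given exactly by \eqref{eq:dualGradC} and \eqref{eq:dualGradU} evaluated at $\bm{\lambda}^{*}(\bm{p})$. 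The main obstacle is the careful verification of the Danskin hypotheses, namely compactness of $\Lambda$, joint continuity of $L$ in $(\bm{\lambda}, \bm{p})$, and uniqueness of the inner maximizer; once these are established, identifying the gradient components with the quantities already recorded in the dual-gradient formulas is a routine computation.
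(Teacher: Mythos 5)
Your proposal is correct, and in fact it supplies more than the paper does: the paper states Lemma \ref{lemma:dualconvex} without proof, treating it as a standard fact imported from the NUM literature (the analogous lemma in \cite{FlowControlI} is proved by exactly the route you take). Your three steps are all sound: convexity as a pointwise supremum of functions affine in $\bm{p}$; lower boundedness by evaluating $L$ at the Slater point $\bm{\lambda}^{\min}$, where the nonpositivity of the constraint residuals makes the multiplier terms nonnegative for $\bm{p}\geq\bm{0}$; and continuous differentiability via uniqueness of the inner maximizer (strict concavity of $L(\cdot,\bm{p})$ on the compact box $\Lambda$) plus Danskin's theorem, with the gradient matching (\ref{eq:dualGradC})--(\ref{eq:dualGradU}). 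Note that continuity of $\bm{p}\mapsto\bm{\lambda}^{*}(\bm{p})$ can also be read off directly from the closed form (\ref{eq:formalOptimalRates}), since $[(f_s')^{-1}]$ composed with the clipping operator is continuous; this avoids invoking Berge's theorem. One small observation worth recording: lower boundedness genuinely requires the feasibility conditions (\ref{eq:Slater1})--(\ref{eq:Slater2}) and not A1 alone (if $\bm{\lambda}^{\min}$ violated the capacity constraint, $D(\bm{p})$ would tend to $-\infty$ along $p_c\to\infty$), so your reliance on Slater is not optional and the lemma's hypothesis as stated in the paper is slightly incomplete; your proof correctly uses the standing assumptions of the section.
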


For each session $s \in S$, and price vector $\bm{p} \geq 0$, define the quantity $\beta_{s}(\bm{p}): \mathbb{R}^{1 + N} \mapsto \mathbb{R}^{+}$ as follows,
\begin{align}
\label{eq:defBetaS}
    \beta_{s}(\bm{p}) : = \begin{cases}
    \frac{1}{-f_s^{''}\big{(} \lambda^{*}_s(\bm{p})\big{)}},  \text{ if } f'_{s}(\lambda^{\max}_{\text{gen},s}) \leq p_s \leq f'_{s}(\lambda^{\min}_{s}) \\
    0, \text{ otherwise,}
    \end{cases}
\end{align}
where $p_s := p_c + \underset{u \in s}{\sum} p_u$ and $\lambda^{*}_{s}(\bm{p})$ is the unique maximizer of (\ref{eq:dualObjective}).

For any price vector $\bm{p} \geq \bm{0}$ define the matrix $B(\bm{p}) = diag \big{(}
 \beta_s(\bm{p}), \ s \in S\big{)}$ to be the $|S| \times |S|$ matrix with diagonal elements $\beta_s(\bm{p})$. Note that from assumption \href{asA2}{A2}, for all $\bm{p} \geq \bm{0}$,
 \begin{align}
     0 \leq \beta_s(\bm{p}) \leq \alpha_s < \infty. 
 \end{align}
 
 Define the user-session mapping matrix $R$ to be the $N \times |S|$ matrix whose $(u, \ s)$-th entry is given by, 
 \begin{align}
     R_{u}^{\ s} = \begin{cases}
     1, \text{ if } u \in s \text{ or equivalently } s \in S(u) \\
     0, \text{ otherwise.}
     \end{cases}
 \end{align}
 
The augmented session mapping matrix $\tilde{R}$ is the $\big{(}1 + N\big{)} \times |S|$ matrix whose $(r, \ s)$-th entry is,
 \begin{align}
    \label{eq:augR}
    \tilde{R}_{r}^{\ s} = \begin{cases}
    1, \text{ if } r=1 \\
    R_{r-1}^{ \ \ s}, \ r \neq 1.
    \end{cases}
 \end{align}

 \begin{lemma}
 \label{lemma:dualHessian}
Under assumption \href{asA1}{A1}, where it exists, the Hessian of the dual function $D$ is given by
\begin{align}
\label{eq:dualHessian}
\grad^2 D(\bm{p}) = \tilde{R} B(\bm{p}) \tilde{R}^{\text{T}}.
\end{align}
 \end{lemma}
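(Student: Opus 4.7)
The plan is to differentiate the dual objective twice, using the envelope theorem for the first derivative and implicit differentiation of the first-order optimality condition for the second, and then collect terms into the advertised matrix form.

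First I would re-use the separable form of the Lagrangian in \eqref{eq:SepLagrangian}. Writing $p_s := p_c + \sum_{u \in s} p_u = \sum_{r} \tilde{R}_r^{\ s} p_r$, the dual decomposes as
\begin{equation*}
    D(\bm{p}) \;=\; \sum_{s} \sup_{\lambda_s \in \Lambda_s} \bigl(f_s(\lambda_s) - p_s \lambda_s\bigr) \;+\; p_c \lambda_{\text{EGS}} + \sum_{u} p_u \lambda_u .
\end{equation*}
Since each $f_s$ is strictly concave by \textbf{A1}, the inner maximizer $\lambda_s^{*}(\bm{p})$ is unique, and by Danskin's theorem the partial derivatives of $D$ are obtained by evaluating the partials of $L$ at $\bm{\lambda}^{*}(\bm{p})$. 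This recovers the gradient components in \eqref{eq:dualGradC}--\eqref{eq:dualGradU}, which in matrix form read $\grad D(\bm{p}) = -\tilde{R}\bm{\lambda}^{*}(\bm{p}) + \bm{\kappa}$ with $\bm{\kappa} = (\lambda_{\text{EGS}}, \lambda_u \ \forall u)^{\mathrm{T}}$ a constant vector.

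Next I would compute $\partial \lambda_s^{*}/\partial p_r$ at a point where $D$ is twice differentiable. Consider a session $s$ whose optimizer lies in the interior of $\Lambda_s$; this is exactly the case $f'_s(\lambda^{\max}_{\text{gen},s}) < p_s < f'_s(\lambda^{\min}_s)$ (by strict concavity, $f'_s$ is strictly decreasing on $\Lambda_s$). There the first-order condition $f'_s(\lambda_s^{*}(\bm{p})) = p_s = \sum_{r'} \tilde{R}_{r'}^{\ s} p_{r'}$ is active, and implicit differentiation yields
\begin{equation*}
    f''_s\bigl(\lambda_s^{*}(\bm{p})\bigr) \, \frac{\partial \lambda_s^{*}}{\partial p_r} \;=\; \tilde{R}_{r}^{\ s},
\end{equation*}
so that $\partial \lambda_s^{*}/\partial p_r = -\beta_s(\bm{p}) \, \tilde{R}_r^{\ s}$ by the first branch of \eqref{eq:defBetaS}. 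If instead $\lambda_s^{*}(\bm{p})$ is pinned at one of the boundary values $\lambda^{\min}_s$ or $\lambda^{\max}_{\text{gen},s}$ and $\bm{p}$ lies strictly in the corresponding linear region, then $\lambda_s^{*}$ is locally constant in $\bm{p}$, hence $\partial \lambda_s^{*}/\partial p_r = 0$; this agrees with the second branch of \eqref{eq:defBetaS} where $\beta_s(\bm{p}) = 0$. In both regimes we therefore have the unified formula $\partial \lambda_s^{*}/\partial p_r = -\beta_s(\bm{p}) \tilde{R}_r^{\ s}$.

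Finally, differentiating the gradient component $\partial D/\partial p_r = -\sum_s \tilde{R}_r^{\ s} \lambda_s^{*}(\bm{p}) + \kappa_r$ with respect to $p_{r'}$ and substituting the expression just derived gives
\begin{equation*}
    \frac{\partial^2 D(\bm{p})}{\partial p_r \partial p_{r'}} \;=\; -\sum_{s} \tilde{R}_r^{\ s} \, \frac{\partial \lambda_s^{*}}{\partial p_{r'}} \;=\; \sum_{s} \tilde{R}_r^{\ s} \, \beta_s(\bm{p}) \, \tilde{R}_{r'}^{\ s},
\end{equation*}
which is exactly the $(r,r')$-entry of $\tilde{R} B(\bm{p}) \tilde{R}^{\mathrm{T}}$, establishing \eqref{eq:dualHessian}. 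The main subtlety I expect is the transition between the interior and boundary regimes: at the finitely many prices where $p_s$ coincides with $f'_s(\lambda^{\min}_s)$ or $f'_s(\lambda^{\max}_{\text{gen},s})$, $\lambda_s^{*}$ is continuous but not differentiable in $\bm{p}$, so $D$ need not be twice differentiable there; the lemma's ``where it exists'' qualifier sidesteps this, and the boundedness $0 \leq \beta_s(\bm{p}) \leq \alpha_s$ guaranteed by \textbf{A2} ensures $\tilde{R} B(\bm{p}) \tilde{R}^{\mathrm{T}}$ is well-defined and positive semidefinite throughout.
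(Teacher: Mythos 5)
Your proposal is correct and follows essentially the same route as the paper: the paper likewise starts from $\grad D(\bm{p}) = \bm{c} - \tilde{R}\bm{\lambda}^{*}(\bm{p})$, obtains $\partial \lambda_s^{*}/\partial p_r = \tilde{R}_r^{\ s}/f''_s(\lambda_s^{*}(\bm{p}))$ in the interior regime and $0$ otherwise (invoking the inverse function theorem where you use implicit differentiation of the first-order condition --- the same computation), and concludes $\grad^2 D = -\tilde{R}\,\grad_p\bm{\lambda}^{*} = \tilde{R}B(\bm{p})\tilde{R}^{\mathrm{T}}$. Your explicit handling of the boundary-pinned case and of the nondifferentiable transition points is a welcome elaboration of what the paper leaves to the ``where it exists'' qualifier, but it does not change the argument.
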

 
\begin{proof}
Let $\grad_p \lambda^{*}$ denote the $|S| \times (1+N)$ Jacobian matrix whose $(s, \ r)$-th element is $\big{(} \partial \lambda^{*}_s / \partial p_r\big{)} (\bm{p}), \ r \in (c, \ u \ \forall u ) $. As a consequence of the Inverse function theorem \cite{Rudin} and (\ref{eq:formalOptimalRates}), when it exists, 
\begin{align}
\label{eq:piecesgradLambda}
\frac{\partial \lambda^{*}_s}{\partial p_r} = \begin{cases}
\frac{\tilde{R}_{r}^{\ s}}{f^{''}_{s}\big{(}\lambda^{*}_{s}(\bm{p})\big{)}}, \text{ if } f^{'}_{s}(\lambda^{\max}_{\text{gen}, s}) < p_{s} < f^{'}_{s}(\lambda^{\min}_{s}); \\
0 , \text{ otherwise; }
\end{cases}
\end{align}
where $r \in (c, \ u \ \forall u ).$ Using (\ref{eq:defBetaS}) we can write,
\begin{align}
\label{eq:gradLambda} 
\grad_p \lambda^{*} = -B(\bm{p}) \tilde{R}^{\text{T}}. 
\end{align}
From (\ref{eq:dualGradC}) and (\ref{eq:dualGradU}), $\grad D(\bm{p})  = c - \tilde{R} \lambda$, where $c : = (\lambda_{\text{EGS}}, \ \overline{\lambda}_u \ \forall u)$, therefore,
\begin{align*}
\grad^2 D(\bm{p}) = - \tilde{R} \grad_{p} \bm{\lambda}= \tilde{R} B(\bm{p}) \tilde{R}^{\text{T}}. 
\end{align*}
\end{proof} 

\begin{lemma}
\label{lemma:dualLipschitz}
Under assumptions A1 and A2, the gradient of the dual function $\grad D(\bm{p})$ (\ref{eq:dualObjective}, \ref{eq:dualGradC}, \ref{eq:dualGradU}) is Lipschitz continuous with Lipschitz constant $L = \underset{s \in S}{\max} \ \beta_{s}(p) \cdot  \ |S|$. 
\end{lemma}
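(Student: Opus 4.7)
The plan is to derive the Lipschitz constant by uniformly controlling the spectral norm of the Hessian $\grad^{2} D$ via Lemma \ref{lemma:dualHessian} and then invoking the fundamental theorem of calculus. Concretely, at two price vectors $\bm{p}, \bm{p}' \geq \bm{0}$, I would write
\begin{equation*}
\grad D(\bm{p}) - \grad D(\bm{p}') = \int_{0}^{1} \grad^{2} D\bigl(\bm{p}' + t(\bm{p} - \bm{p}')\bigr)(\bm{p} - \bm{p}') \, dt,
\end{equation*}
so that the claim reduces to a uniform bound $\sup_{\bm{q} \geq \bm{0}} \|\grad^{2} D(\bm{q})\| \leq L$. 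The continuous differentiability needed to justify this step follows from Lemma \ref{lemma:dualconvex}, and assumption \href{asA1}{A1} guarantees that $\lambda^{*}_{s}$ (and hence $B(\bm{q})$) depends continuously on $\bm{q}$ where it varies nontrivially.

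Next I would substitute $\grad^{2} D(\bm{q}) = \tilde{R} B(\bm{q}) \tilde{R}^{\text{T}}$ from Lemma \ref{lemma:dualHessian}. Because $B(\bm{q})$ is nonnegative and diagonal, its induced 2-norm equals $\max_{s \in S} \beta_{s}(\bm{q})$, which by \href{asA2}{A2} and (\ref{eq:defBetaS}) is uniformly finite (bounded above by $\max_{s} \alpha_{s}$). Submultiplicativity of the spectral norm then yields
\begin{equation*}
\|\grad^{2} D(\bm{q})\|_{2} \leq \|\tilde{R}\|_{2}^{2} \cdot \max_{s \in S} \beta_{s}(\bm{q}),
\end{equation*}
reducing the task to estimating $\|\tilde{R}\|_{2}^{2}$ in terms of $|S|$.

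The combinatorial step exploits the $0/1$ structure of $\tilde{R}$: every column contains at most three nonzero entries (the central row plus the two endpoint-user rows of the session), while the central row has exactly $|S|$ ones. A clean route is to analyze the quadratic form
\begin{equation*}
\bm{x}^{\text{T}} \tilde{R} B(\bm{q}) \tilde{R}^{\text{T}} \bm{x} = \sum_{s \in S} \beta_{s}(\bm{q}) \Bigl( x_{c} + \sum_{u \in s} x_{u} \Bigr)^{\!2},
\end{equation*}
apply Cauchy--Schwarz to each bracketed sum, and then invoke the double-counting identity $\sum_{s} \mathbf{1}_{\{u \in s\}} = |S(u)| \leq |S|$. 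This yields a bound of order $|S| \cdot \max_{s} \beta_{s}(\bm{q}) \cdot \|\bm{x}\|^{2}$, from which the stated Lipschitz constant follows.

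The main obstacle I expect is pinning down the constant to be exactly $|S|$ rather than a small absolute multiple of it, since the naive submultiplicative estimate carries an extra factor from the column sparsity of $\tilde{R}$; collapsing this cleanly requires the quadratic-form calculation above, or alternatively working in the $\|\cdot\|_{\infty}$ / $\|\cdot\|_{1}$ norms and absorbing constants consistently. A secondary technical point is handling boundary prices at which some $p_{s}$ lies at an endpoint of the admissible interval in (\ref{eq:defBetaS}), so that $\beta_{s}(\bm{p}) = 0$ and $\grad D$ is only one-sidedly differentiable there; convexity of $D$ (Lemma \ref{lemma:dualconvex}) and approximation from the interior along admissible directions allow the integral representation and the Hessian bound to be extended to all of $\{\bm{p} \geq \bm{0}\}$.
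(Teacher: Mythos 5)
Your strategy is the same as the paper's: bound the Hessian $\grad^{2} D = \tilde{R} B(\bm{p}) \tilde{R}^{\text{T}}$ from Lemma \ref{lemma:dualHessian} uniformly in the operator norm and then integrate (the paper invokes the mean-value inequality of Theorem \ref{thm:Rudin9.19} where you use the fundamental theorem of calculus; these are interchangeable here). Your handling of the boundary prices where $\lambda^{*}_{s}$ is clipped and the Hessian fails to exist is also adequate: $\grad D$ is continuous and piecewise $C^{1}$, so a uniform bound on the Hessian where it exists already yields the Lipschitz estimate everywhere.

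The one substantive point is the ``obstacle'' you flag at the end, and you should not try to argue it away: the extra factor coming from the column sparsity of $\tilde{R}$ is genuinely there, and your own quadratic form shows it. Take a single session $s=(1,2)$, so $|S|=1$ and $\tilde{R}=(1,1,1)^{\text{T}}$; with $\bm{x}=(1,1,1)/\sqrt{3}$,
\begin{equation*}
\bm{x}^{\text{T}} \tilde{R} B(\bm{p}) \tilde{R}^{\text{T}} \bm{x} \;=\; \beta_{s}(\bm{p})\bigl(x_{c}+x_{u_1}+x_{u_2}\bigr)^{2} \;=\; 3\,\beta_{s}(\bm{p}),
\end{equation*}
whereas $|S|\cdot\max_{s}\beta_{s}=\beta_{s}$. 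So the spectral norm of the Hessian is \emph{not} bounded by $\max_{s}\beta_{s}\cdot|S|$; the bound your Cauchy--Schwarz step actually delivers (each column of $\tilde{R}$ has exactly three ones, and $|S(u)|\leq|S|$) is $3|S|\max_{s}\beta_{s}$, and that cannot be collapsed to $|S|\max_{s}\beta_{s}$. The paper's own proof simply asserts $\|\grad^{2}D(\bm{p})\|\leq\max_{s}\beta_{s}\cdot|S|$ ``using the definition of the operator norm'' without computation, and that assertion fails for the Euclidean operator norm as the example shows. Your argument therefore proves the qualitatively important conclusion --- Lipschitz continuity of $\grad D$ --- but with constant $3|S|\max_{s}\beta_{s}$; the constant in the lemma statement, and consequently the admissible step-size window $(0,\,2/\overline{\alpha}|S|)$ in Theorem \ref{thm:ConvergenceThm}, should be adjusted by this factor of three.
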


We use the following theorem to prove Lemma \ref{lemma:dualLipschitz},
\begin{theorem}[Rudin, 9.19 \cite{Rudin}]
\label{thm:Rudin9.19}
Suppose $\mathbf{f}$ maps a convex open set $E \subset \mathbb{R}^n$ into $\mathbb{R}^m$, $\mathbf{f}$ is differentiable in $E$, and there is a real number $M$ such that
\begin{align*}
    ||\mathbf{f}'(\mathbf{x})|| \leq M,
\end{align*}
for every $\mathbf{x} \in E$. Then,
\begin{align*}
    |\mathbf{f}(\mathbf{b}) - \mathbf{f}(\mathbf{a})| \leq M |\mathbf{b} - \mathbf{a}|, 
\end{align*}
for all $\mathbf{a} \in E, \ \mathbf{b} \in E$. 
\end{theorem}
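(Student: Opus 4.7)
The plan is to reduce this multivariable statement to a one-dimensional mean value inequality by restricting $\mathbf{f}$ to the line segment joining $\mathbf{a}$ and $\mathbf{b}$. Since $E$ is convex, the parametrized segment $\boldsymbol{\gamma}(t) = (1-t)\mathbf{a} + t\mathbf{b}$ lies in $E$ for all $t \in [0,1]$. First I would define the composition $\mathbf{g}:[0,1] \to \mathbb{R}^m$ by $\mathbf{g}(t) = \mathbf{f}(\boldsymbol{\gamma}(t))$. By the chain rule (applicable because $\mathbf{f}$ is differentiable on $E$ and $\boldsymbol{\gamma}$ is affine), $\mathbf{g}$ is differentiable on $(0,1)$ with $\mathbf{g}'(t) = \mathbf{f}'(\boldsymbol{\gamma}(t))(\mathbf{b}-\mathbf{a})$. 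Using the operator norm hypothesis, this yields the pointwise bound $|\mathbf{g}'(t)| \leq \|\mathbf{f}'(\boldsymbol{\gamma}(t))\| \cdot |\mathbf{b}-\mathbf{a}| \leq M |\mathbf{b}-\mathbf{a}|$ for every $t \in (0,1)$.

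The second step, and the main obstacle, is that the classical mean value theorem does not extend directly to vector-valued functions, so I cannot simply write $\mathbf{g}(1)-\mathbf{g}(0) = \mathbf{g}'(c)$ for some intermediate $c$. The standard workaround is a scalar reduction via inner product. Let $\mathbf{v} := \mathbf{g}(1) - \mathbf{g}(0) = \mathbf{f}(\mathbf{b}) - \mathbf{f}(\mathbf{a})$ and define the real-valued function $\phi(t) := \mathbf{v} \cdot \mathbf{g}(t)$ on $[0,1]$. Then $\phi$ is continuous on $[0,1]$ and differentiable on $(0,1)$ with $\phi'(t) = \mathbf{v} \cdot \mathbf{g}'(t)$, so the ordinary mean value theorem applied to $\phi$ yields a point $c \in (0,1)$ with $\phi(1) - \phi(0) = \phi'(c)$, that is,
\begin{equation*}
|\mathbf{v}|^2 = \mathbf{v} \cdot \mathbf{g}'(c).
\end{equation*}

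To conclude, I would apply the Cauchy--Schwarz inequality to the right-hand side, obtaining $|\mathbf{v}|^2 \leq |\mathbf{v}| \cdot |\mathbf{g}'(c)|$. If $\mathbf{v} = \mathbf{0}$ the desired inequality is trivial; otherwise dividing both sides by $|\mathbf{v}|$ gives $|\mathbf{v}| \leq |\mathbf{g}'(c)| \leq M|\mathbf{b}-\mathbf{a}|$, which is exactly $|\mathbf{f}(\mathbf{b}) - \mathbf{f}(\mathbf{a})| \leq M|\mathbf{b}-\mathbf{a}|$. The delicate points to verify carefully along the way are the chain rule step (which only requires differentiability of $\mathbf{f}$ at points of $E$ together with the smoothness of the affine curve $\boldsymbol{\gamma}$), and the consistency of the matrix/operator norm $\|\cdot\|$ bounding $|\mathbf{f}'(\mathbf{x})(\mathbf{b}-\mathbf{a})| \leq \|\mathbf{f}'(\mathbf{x})\| \cdot |\mathbf{b}-\mathbf{a}|$; both are standard and require no new machinery.
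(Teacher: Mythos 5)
Your proof is correct and complete: the restriction to the segment $\boldsymbol{\gamma}(t)=(1-t)\mathbf{a}+t\mathbf{b}$, the scalar reduction $\phi(t)=\mathbf{v}\cdot\mathbf{g}(t)$ with the one-dimensional mean value theorem, and the Cauchy--Schwarz step are exactly the standard argument. The paper itself gives no proof of this statement --- it is quoted verbatim from Rudin \cite{Rudin} as an external result --- and your argument is essentially Rudin's own (his Theorem 9.19 reduces to the vector-valued one-variable inequality of his Theorem 5.19, whose proof is precisely your inner-product trick), so there is nothing to compare beyond noting the match.
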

\begin{proof}[Proof of Lemma \ref{lemma:dualLipschitz}]
From Lemma \ref{lemma:dualHessian}, the Hessian of the dual function is the $\big{(} 1 + N \big{)} \times \big{(} 1 + N \big{)}$ matrix $\grad^2 D(p) = \tilde{R}B(p)\tilde{R}^{\text{T}}$. It is simple to explicitly determine the $(r, r')$-th entry of $\grad^2 D(p)$. By matrix multiplication, $B(p)\tilde{R}^{\text{T}}$ is the $|S| \times \big{(} 1 + N \big{)}$ matrix whose $(s, r)$-th entry is, \small
\begin{align}
    \big{(} B(p)\tilde{R}^{\text{T}}\big{)}_{s}^{\ r} = \begin{cases}
    \beta_s(p), \text{ if } r=1 \text{ or } \big{(}r > 1 \text{ and } s \in S(r-1) \big{)} \\
    0, \text{ otherwise}.
    \end{cases}
\end{align} \normalsize

By matrix multiplication we calculate the $(r, \ r')$-th entry of $\grad^2 D(p)$ as,
\begin{align}
\big{(}\grad^2 D(p) \big{)}_{r}^{\ r'} &= \sum_{s} R_{r}^{\ s} \big{(} B(p) R^{\text{T}}\big{)}_{s}^{\ r'} \nonumber \\
&= \begin{cases} 
\sum_s  \ \beta_s, \ r=r' =1 \\
\underset{s \in S(r' -1)}{\sum} \beta_s, \ r=1 \text{ and } r'>1 \\
\underset{s \in S(r -1)}{\sum} \beta_s, \ r>1 \text{ and } r'=1 \\
\underset{s \in S(r-1) \cap S(r' -1)}{\sum} \beta_s, \ r > 1 \text{ and } r'>1 
\end{cases}
\end{align}

Using the definition of the operator norm (\cite{Rudin}, Definition 9.6 (c)) we bound the norm of the Hessian of the dual function,
\begin{align}
    ||\grad^{2} D(p)|| \leq \underset{s \in S}{\max} \ \beta_s \cdot \ |S|.
\end{align}
The result of the lemma then follows by application of Theorem \ref{thm:Rudin9.19}. Proof of Theorem \ref{thm:ConvergenceThm} is assured by the following Theorem, which follows from the Descent Lemma of Convex Optimization Theory \cite{BertsekasConvOpt},

\begin{theorem}[\cite{BertsekasConvOpt}]
Let $f: \mathbb{R}^{n} \rightarrow \mathbb{R}$ be a continuously differentiable function and let $X$ be a closed convex set. Assume $\grad f$ satisfies the Lipschitz condition with Lipschitz constant $L$ and consider the gradient projection iteration,
$$ x_{k+1} = P_x \big{(} x_k - \gamma \grad f (x_k)\big{)},$$
with a constant step-size $\gamma$ in the range $\big{(} 0 , \ \frac{2}{L}\big{)}$. Then every limit point $\overline{x}$ of the generated sequence $\{ x_k\}$ satisfies the optimality condition: 
$$\grad f (\overline{x})^{\text{T}} (x - \overline{x}) \geq 0, \ \forall \ x \in X.  $$
\end{theorem}

\end{proof}

\bibliographystyle{unsrt}
\bibliography{longRefs}

\EOD

\end{document}